\numberwithin{equation}{section}
\newcommand{\n}[0]{\hspace*{.35em}}
\newcommand{\nn}[0]{\hspace*{.7em}}
\newcommand{\ful}{\mbox{$\, \frac{ \nn \nn \;}{ \nn \nn
}$}}
\renewcommand{\hat}[1]{\widehat{#1 }}
\renewcommand{\tilde}[1]{\widetilde{#1 }}
\newcommand{\ci}{\mbox{\protect $\: \perp \hspace{-2.3ex}\perp$ }}
\newcommand{\pr}{\mathrm{Pr}}
\renewcommand{\b}[1]{\boldsymbol{#1}}
\newcommand{\half}{\ensuremath{\tfrac{1}{2}}}
\newcommand{\reals}{\mathbb{R}}
\renewcommand{\arraystretch}{1.1}
\newtheorem{thm}{Theorem}[section]
\newtheorem{prop}[thm]{Proposition}
\newtheorem{cor}[thm]{Corollary}
\theoremstyle{definition}
\newtheorem{exa}[thm]{Example}
\theoremstyle{remark}
\newtheorem{rem}[thm]{Remark}
\newcommand{\hcal}{\ensuremath{\mathcal{H}}}
\newcommand{\inner}{\boldsymbol{\cdot}}
\newcommand{\had}{\ensuremath{\scriptsize\begin{pmatrix} 
1 & \phantom{-}1 \\ 
1 & -1\\ 
\end{pmatrix}}}
\newcommand{\fracshalf}{\mbox{$\frac{1}{2}$}}
\begin{document}

\begin{frontmatter}

\title{Palindromic Bernoulli distributions}
\runtitle{Palindromic Bernoulli distributions}
\begin{aug}
  \author{\fnms{Giovanni M.} \snm{Marchetti}\corref{}\ead[label=e1]{giovanni.marchetti@disia.unifi.it}}
  \address{Dipartimento di Statistica, Informatica, Applicazioni ``G. Parenti'', Florence, Italy\\\printead{e1}}
  \and
  \author{\fnms{Nanny} \snm{Wermuth}\ead[label=e2]{wermuth@chalmers.se}}\address{Mathematical Statistics, Chalmers University of Technology, Gothenburg, Sweden\\ and Medical Psychology and Medical Sociology, Gutenberg University Mainz, Germany\\\printead{e2}} 
\end{aug}         
\runauthor{G. M. Marchetti and N. Wermuth}

\begin{abstract}
We introduce and  study  a subclass of  joint Bernoulli distributions which has the palindromic property.  For such distributions the vector of joint probabilities is unchanged when  the  order of the elements is reversed. We prove 
 for binary variables that the palindromic property is equivalent  to zero constraints on all odd-order interaction parameters, be it in parameterizations
which are log-linear, linear or  multivariate logistic. In particular, we  derive the  one-to-one parametric transformations for  these three types of model specifications and give simple closed forms  of  maximum likelihood estimates.  Several special cases are discussed and a case study is described.
\end{abstract}

\begin{keyword}[class=AMS]
\kwd[Primary ]{62E10}  \kwd[; secondary]{ 62H17, 62H20}.
\end{keyword}

\begin{keyword}
Central symmetry; Linear in probability models; Log-linear models; 
Multivariate logistic models; Median-dichotomization; Orthant probabilities; Odd-order interactions	
\end{keyword}

\tableofcontents

\end{frontmatter}


\section{Introduction}\label{sec:intro}

A sequence of characters, such as  \texttt{QR*-TS}, becomes a  palindromic sequence when the order of the characters is reversed and appended,  here to give  
\texttt{QR*-TSST-*RQ}.  The notion is used in somewhat modified forms, among others, in musicology, biology and linguistics.  An example of a palindromic  sentence which respects the spacings between words is `step on no pets'.

Here, we adapt the term to  Bernoulli distributions. For a single binary variable, the distribution is palindromic if it is uniform, that is if both levels occur with probability $1/2$. For a Bernoulli distribution of $d$ binary variables $A_1, \dots, A_d$, having a probability mass function $p(a)$ with $a$ in the set of all binary $d$-vectors,  the distribution is palindromic if 
$p(a) = p(\sim a) \quad \text{for all } a$,     
where $\sim a$  is  the complement of  $a$; for instance,  $\sim a = (0,1,0)$  for $a = (1,0,1)$. 

 With $\alpha, \beta, \gamma, \delta$ denoting probabilities, bivariate and  trivariate palindromic Bernoulli distributions
can be written, as in the following tables:

\begin{small}
$$
\begin{tabular}{lcccc}
\hline\\[-5mm]
$A_1$    &\hspace{-6mm}  $A_2\!: $\hspace{-5mm} &   0   & 1               & \text{sum}\\[1mm] \hline
\, 0    && $\alpha$         &$ \beta$ & $1/2$ \\ 
\, 1        &  &$ \beta$ &$ \alpha  $             & 1/2 \\[1mm] \hline
\text{sum}& & 1/2     & 1/2           & 1\\[1mm]  
\hline 
\end{tabular} \hspace{10mm}
\begin{tabular}{lcccccc}
\hline
    \quad         &\hspace{-6mm}  $A_3\!:$\hspace{-5mm}  &0   & 0         & 1  & 1  &           \\
  $A_1 $         &\hspace{-6mm}  $A_2\!:$\hspace{-5mm}  &0   & 1         & 0  & 1  & \text{sum}\\ \hline
\, 0    & &$\alpha $         & $\gamma  $       &$ \delta $  &$  \beta$   & 1/2    
\\ 
\, 1         & &  $\beta $        &$ \delta  $       & $\gamma $  &$ \alpha $    & 1/2   
\\ \hline
\text{sum} &&$ \alpha+\beta$       &  $\gamma + \delta$     &$ \gamma + \delta$&
$\alpha + \beta$  & 1         \\   \hline
\end{tabular}
$$  
\end{small} 
 
 Continuous distributions may also be palindromic. This concept extends the discussion of  the diverse forms of multivariate symmetry by  \citet{serfling2006}  since  it operationalizes his notion of central symmetry  in an attractive way. Let  the above  binary vector $a$  define for  $d$ mean-centred  continuous variables  their  orthant probabilities. 
 Such a  distribution is  palindromic if the  probabilities of the $d(a)$-orthant and the  $d(\sim a)$-orthant  coincide for all  $a$.  
Examples are for instance  mean-centred Gaussian  and   spherical distributions.  
\begin{figure}[h]
\includegraphics[scale = .5]{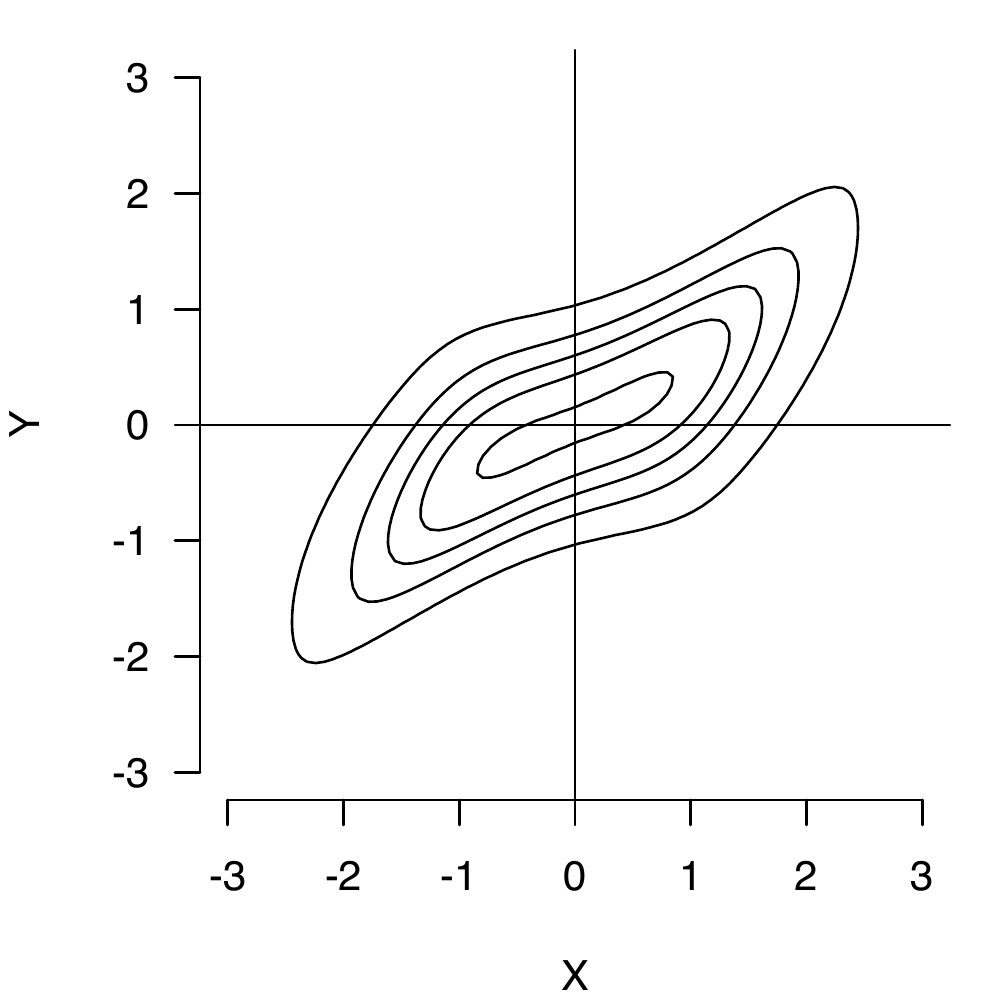} 
\hspace{1cm}
\includegraphics[scale =.5]{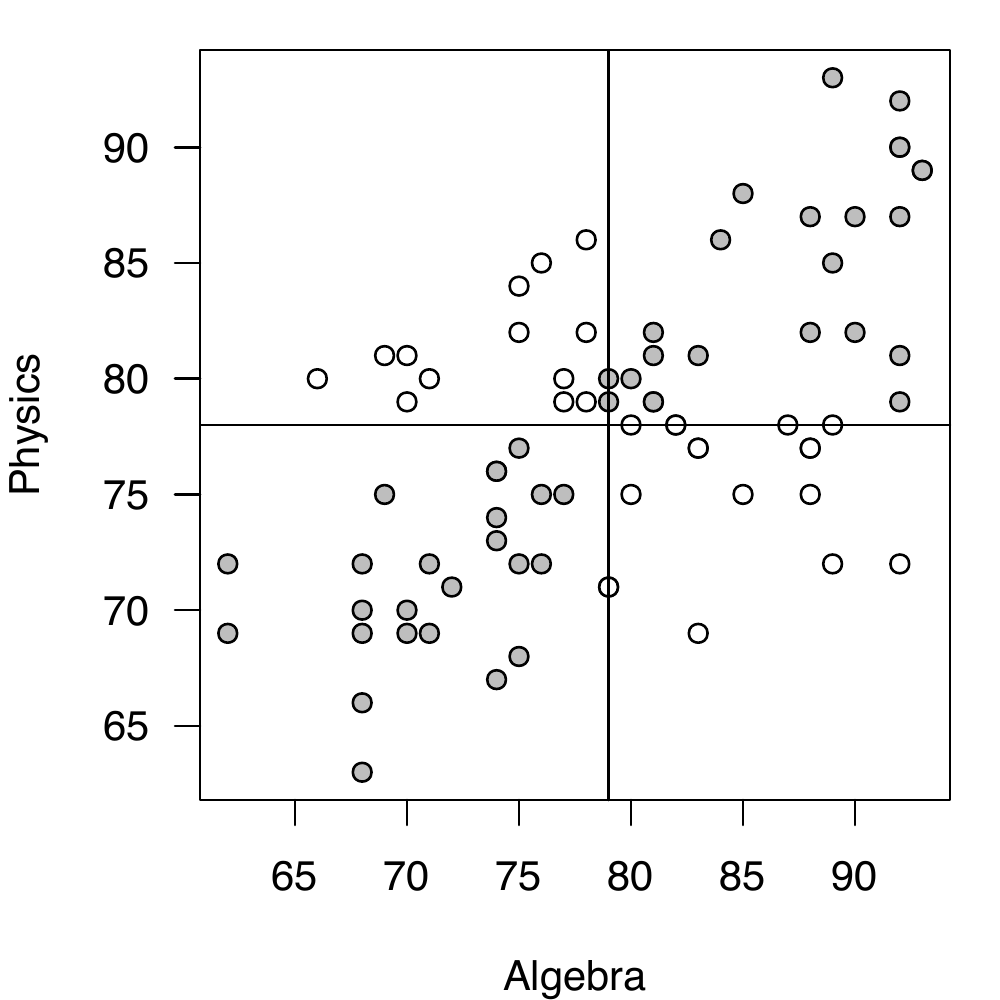} 
\caption{\label{fig:cs} Contour levels of a centrally symmetric bivariate density (left) and a median dichotomized sample (right). The data are from the case study of section~\ref{sec:case}.}
\end{figure}

 For general discrete variables,  the palindromic property differs from  \emph{complete symmetry} defined by  \citet{bhapkar1990}. Complete symmetric tables satisfy the condition  $p(a) = p(\sigma(a))$, for any $a$ and for any permutation $\sigma$  of the indices.  
     \citet[app. C]{edwards2000} median-dichotomized joint Gaussian distributions  and  proved that the resulting binary  probabilities  give a non-hierarchical log-linear model in which all odd-order interactions vanish, that is  all terms involving an odd number of factors are zero.

In this paper,  we study properties of palindromic Bernoulli distributions in general. In particular, we prove that the vanishing of all  odd-order log-linear interactions  is not only a necessary but also a sufficient condition. We show the same characterization for  models  linear-in-probabilities,  \citet{streitberg1990}, and for the multivariate logistic parametrization, \citet{glonek1995}, and explain why palindromic Bernoulli distributions
with Markov structure are in the regular exponential family.

\section{Characterization in terms of interaction parameters} \label{sec:charact}

In this section we introduce three types of parameterization and we show that the palindromic Bernoulli distributions can be characterized by the vanishing of all odd-order interactions, no matter whether these are log-linear, linear or multivariate logistic terms.

\subsection{Notation}
Let  $A = (A_1, \dots, A_d)$ be a random vector with a \emph{multivariate
Bernoulli} distribution. Thus, $A$ takes values $a = (a_1, \dots, a_d)$ in the
set $\mathcal{I} = \{0,1\}^d$ with  probabilities 
$$
 p(a) = \pr(A_1 = a_1, \dots, A_p = a_p), \quad  \textstyle{\sum_{a\in \mathcal{I}}} \,p(a) = 1.
$$  
For simplicity, we assume  $p(a) >0$ for all $a$.
The probability distribution of $A$ is determined by the  $2^d \times 1$ vector
$  \pi$ containing all the probabilities $p(a)$ and belonging to the
$(2^{d}-1)$-dimensional simplex. 
We  list vectors $a$ in a lexicographic order such that 
 the first index in $a$ runs  fastest, then the second changes and  the last index runs slowest.
 Cells of a corresponding contingency table are in vector $b\in \mathcal{I}$.

Given a subset $M \subseteq V$ of the variables,   the  marginal distribution of
the  variables $A_v$, for $v \in M$ has itself a joint Bernoulli
distribution, in the same lexicographic order: 
$$
p_M(a_M) = \pr(A_v = a_v, \text{ for all } v \in M).
$$

We  use three well-studied parameterizations for  joint Bernoulli distributions, that is the log-linear, the linear and the
multivariate logistic parameterizations and show how and why they differ even
for palindromic Bernoulli distributions.

In general, a parameterization of $A$ is a smooth one-to-one transformation,
mapping  $\pi$ into a  $2^p \times 1$ vector $  \theta = G(  \pi)$, say, 
whose entries $\theta_b$, are called \emph{interaction parameters}. 
To index interactions, it is useful to have a one-to-one mapping between the
cells in $b$ and subsets of  $V = \{1, \dots, p\}$. For 
$p = 3$:
$$
\small
\begin{array}{rcccccccc} \hline\\[-6mm]
            & \multicolumn{8}{c}{\text{Lexicographic order}} \\
            \cline{2-9}\\[-5mm]
\text{cells in } b: & 000 & 100 & 010 & 110 & 001 & 101 & 011 & 111\\
 \text{ subset of $V$:} & \emptyset & 1 & 2 & 12 & 3 & 13 & 23 & 123\\
\theta_b:       & \theta_\emptyset & \theta_1 & \theta_2 & \theta_{12} &
\theta_3 & \theta_{13} & \theta_{23} & \theta_{123}\\ \hline\\[-6mm]
\end{array}
$$
The cardinality of the set $b$, denoted by $|b| = \sum_v b_v$,
gives the number of ones in vector $b$.
Depending on  $|b|$ being odd or even, an interaction parameter $\theta_b$  is said to be of odd or even
order.  For instance, the even-order
$\theta_{13}$ is a two-factor interaction of $A_1$ and  $A_3$.

\subsection{Log-linear parameters}
 \emph{Log-linear  parameters}
  are  contrasts of log probabilities, that is   linear combinations of  log $p(a)$, with
weights adding to zero. 
The vector of the log-linear parameters is  
\begin{gather}
  \lambda  =  \hcal_d^{-1} \log   \pi  \label{eq:lambda} 
\end{gather}
where 
$$
\hcal_d = \underbrace{\had \otimes \cdots \otimes \had}_{d}.
$$
is a $2^d\times 2^d$  symmetric design matrix whose  generic entry is $h_{ab} = (-1)^{a \inner b}$ for 
$(a,b)\in \mathcal{I}\times \mathcal{I}$. Its inverse, the contrast matrix, is
$\hcal_d^{-1} = 2^{-d}\mathcal{H}_d$.  The 
special form of $\hcal_d$, chosen here,  uses so-called \emph{effect coding}; see for instance
\citet{wercox1992}. The individual interactions can be written as
\begin{equation}
 \lambda_b = 2^{-d} {\textstyle{\sum_{a \in \mathcal{I}_V}}}  (-1)^{a \inner b} \log
p(a),\label{invloglini}
\end{equation}
where  $a \inner b = a_1 b_1 + \cdots + a_p b_p$ is the inner product of 
the two binary vectors $a$ and $b$; see \citet[p. 619]{haberman1973}. In equation \eqref{invloglini}, the symbol $b$ is
interpreted for $\lambda_b$ as a subset of $V$   
and  in the expression $(-1)^{a \inner b}$ as binary vector.

 The inverse mappings from $  \lambda
$ to $  \pi$ may be explicitly computed as 
\begin{equation}
  \pi  =  \exp(\hcal_d   \lambda)  \label{eq:pifromlambda}, \quad \quad
p(a) = \exp\big [\textstyle \sum_{b \in \mathcal{I}}   (-1)^{a\inner
b}\lambda_b \big ]. 
\end{equation}

Bernoulli distributions with positive cell probabilities  belong to the so-called regular exponential family with the vector $  \lambda$ containing the canonical parameters. 
\subsection{Linear interactions or moment parameters}

In contrast to log-linear models, the linear-in-probability models,  discussed for instance  by \citet[app. 2]{coxwer1992},  and their interactions are based on moments.
The vector 
$  \xi =  \hcal_d   \pi$  is  a \emph{moment parameter vector} and the mapping
between $  \xi$  and $  \lambda$ is one-to-one and differentiable; see
\citet[ p.~121]{nielsen1978}.

The moment vector $  \xi$ is proportional to the expected value of the
sufficient statistics for $  \lambda$. 
With $  y$ denoting the  vector of the frequencies
and by  the symmetry of $\hcal_d$ of equation \eqref{eq:lambda}, this vector  of sufficient statistics  is
 $\hcal_d   y$. The elements
of $  \xi$, called also \emph{linear interactions}, are  
\begin{gather}
\xi_b =\textstyle{ \sum_{a \in \mathcal{I}}}  (-1)^{a \inner b} \, p(a)
\label{invlini}
\end{gather}
and  gives as  inverse transformations 
\begin{equation}
  \pi = 2^{-d}\; \hcal_d \,  \xi, \quad  \quad p(a) = 2^{-d} \textstyle{\sum_{b \in \mathcal{I}} } (-1)^{a
\inner b} \xi_b.
 \label{eq:lin}   
\end{equation}
For instance, with $d = 2$ the two-factor linear interaction is
$\xi_{11} = p_{00}-p_{01}-p_{10}+p_{11}$.

 For  the
transformed random variables $D_v = (-1)^{A_v}$, which take value $1$ if $A_v = 0$
and $-1$ if $A_v = 1$, the individual interactions are
\begin{gather}
\xi_b = E\Big(\textstyle{ \prod_{v \in V}} D_v^{b_v}\Big) \label{eq:mix}.
\end{gather} 
 Because the element $(-1)^{a\inner b} $ in equation~\eqref{invlini} may be
written with  $d_v = (-1)^{a_v}$ as 
$$
(-1)^{a\inner b} = (-1)^{a_1b_1}\times \cdots \times (-1)^{a_pb_p}
 = \textstyle \prod_{v\in V} (d_v)^{b_v} ,
$$
equation \eqref{invlini} gives the expected value 
of this product with respect to $p(a)$.

Equation \eqref{eq:mix} implies that each moment parameter,
$\xi_b$,  is a \emph{marginal parameter}, defined in
 the marginal distribution 
of the random vector $(A_v)_{v \in b}$, while the log-linear parameter
$\lambda_b$  is defined in the joint distribution.
Therefore there is, in general, no simple relation between  
     the log-linear parameter  $\lambda^M_b$, say, in the marginal distribution
$p_M(a_M)$  and  $\lambda_b$, the log-linear parameter in the joint
distribution, but there are exceptions, see Example~\ref{ex:orth} and Section \ref{sec:ising}. By contrast, the 
moment vector $\xi_b^M$ defined in $p_M(a_M)$ coincides with 
$\xi_b$.  
  
According to an important result for regular exponential families by \citet[pp.~121--122]{nielsen1978}, for an arbitrary partition of the
parameter vectors 
$  \lambda$ and $  \xi$ in two  sub-vectors  such that
$  \lambda = (  \lambda_\mathcal{A},   \lambda_\mathcal{B})$ and $  \xi =
(  \xi_\mathcal{A},   \xi_\mathcal{B})$, the distribution $  \pi$ is
uniquely parameterized by the mixed vector
$(  \lambda_\mathcal{A},   \xi_\mathcal{B})$ or  by 
$(  \xi_\mathcal{A},   \lambda_\mathcal{B})$ and 
there is a diffeomorphism between this \emph{mixed parameterization} and the
log-linear parameter $  \lambda$  or the moment parameter 
$\xi$.
 
%

 \subsection{Multivariate logistic parameters}
The \emph{multivariate logistic parametrization}, introduced by 
\cite{glonek1995}, is defined by the highest order log-linear parameters, 
considered here under effect coding,  in each possible marginal distribution of $A$.
The parameters are given by the vector 
$  \eta = (\eta_b)_{b \subseteq V}$ where 
\begin{equation}     
 \eta_b =  \lambda^b_b, \quad \quad b \subseteq V. \label{eq:mlogit}
\end{equation}

\citet{kauermann1997} showed 
that the mapping $T:   \lambda \mapsto   \eta$  
from the log-linear to the multivariate logistic parameters is a diffeomorphism
by proving that $T$ is a composition of smooth transformations between the
canonical, the moment and the mixed parameters. More details are given below in
subsection~\ref{sec:pro}.

Let $\Lambda= \mathbb{R}^{2^{p}-1}$ be the parameter space for the log-linear
parameters $  \lambda$. Then the parameter space $E =
T(\Lambda)$ for 
$  \eta$ is the image of $  \lambda$ under the transformation $T$. 
 Explicit
forms for the inverse function $T^{-1}: E \rightarrow \Lambda$ are  known for 
$p = 1$ or $p = 2$ and in special cases, such as in Example~\ref{ex:orth}. An
algorithm  provided by \cite{qaqish2006} detects simultaneously  whether
the vector $  \eta$ is compatible with a proper probability vector $  \pi$.


%
%

\subsection{Properties of the palindromic Bernoulli distributions}\label{sec:pro}
We now study several properties of palindromic Bernoulli distributions 
and start by proving that these distributions are closed under marginalization.

\begin{prop}\label{marginalizePBD}
If $ p(a)$ is a palindromic  Bernoulli 
distribution then,  for any subset $M$ of the variables, the marginal distribution $p_M(a_M)$ is  palindromic.    
\end{prop}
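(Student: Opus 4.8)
The plan is to reduce the statement to the defining identity $p(a) = p(\sim a)$ together with the elementary fact that taking the complement of a binary vector acts coordinatewise, and therefore commutes with splitting the coordinates into the block $M$ and the block $V \setminus M$. First I would fix $M \subseteq V$ and an arbitrary value $a_M \in \{0,1\}^{|M|}$, and write the marginal as a sum over the complementary block,
\[
p_M(a_M) = \sum_{a_{V\setminus M}} p(a_M, a_{V\setminus M}),
\]
where $(a_M, a_{V\setminus M})$ denotes the full binary vector reassembled (and, if one wants to be scrupulous, rearranged into the lexicographic order on $V$) from the two sub-blocks. Applying the palindromic hypothesis to each summand gives $p(a_M, a_{V\setminus M}) = p(\sim(a_M,a_{V\setminus M})) = p(\sim a_M, \sim a_{V\setminus M})$, the last equality because complementation is performed entrywise and hence block by block.

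The second step is the reindexing: the map $a_{V\setminus M}\mapsto \sim a_{V\setminus M}$ is a bijection of $\{0,1\}^{|V\setminus M|}$ onto itself, so summing over one range is the same as summing over the other. Hence
\[
p_M(a_M) = \sum_{a_{V\setminus M}} p(\sim a_M, \sim a_{V\setminus M}) = \sum_{c_{V\setminus M}} p(\sim a_M, c_{V\setminus M}) = p_M(\sim a_M),
\]
which is exactly the palindromic property for the marginal distribution; since $a_M$ was arbitrary, this proves the claim. The boundary cases need no separate treatment: for $|M| = 1$ one recovers the stated convention that a one-variable palindromic distribution is uniform (the displayed bivariate and trivariate tables already show these marginal sums equal to $1/2$), and $M = \emptyset$ is vacuous.

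I do not anticipate a genuine obstacle — the content is entirely bookkeeping. The only points worth making explicit are that the lexicographic order used for $p_M$ is the one induced from $V$, so that ``marginal'' and ``complement'' interact cleanly, and that the self-bijection $a_{V\setminus M}\mapsto\sim a_{V\setminus M}$ is what upgrades the termwise palindromic identity to the summed one. An alternative, more machinery-heavy route would argue directly on the probability vector $\pi$: marginalization over $V\setminus M$ is multiplication of $\pi$ by a Kronecker product of $(1,1)$ row-sum blocks and identity blocks, the operation $a\mapsto\sim a$ is the reversal permutation of $\pi$, and one checks that these two linear maps commute. I would present the coordinatewise version, since it is shorter and exposes the mechanism transparently.
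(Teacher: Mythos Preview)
Your proof is correct and follows essentially the same route as the paper: write the marginal as a sum over the complementary block, apply the palindromic identity termwise (using that complementation acts coordinatewise), and reindex via the self-bijection $a_{V\setminus M}\mapsto\sim a_{V\setminus M}$. The paper's version is terser but identical in substance; your remarks on boundary cases and the alternative Kronecker-product formulation are extra commentary, not a different argument.
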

 \begin{proof}
Define the partition $a = (a_N,a_M)$ and let 
 $ p_M(a_M) = \textstyle{\sum_{a_N \in \{0,1\}^{|N|}}}\, p(a_N, a_M).$
 Then if the distribution is palindromic, $ p(a_N, a_M) = p(\sim a_N, \sim
a_M)$ and
\begin{gather*}
  p_M(a_M) =\textstyle{ \sum_{\sim a_N \in \{0,1\}^{|N|}}}\, p(\sim a_N, \sim a_M) = p_M(\sim a_M).    \qedhere
\end{gather*}
  \end{proof}
Next, we  characterise the distribution by  zero constraints on interactions. 
\begin{prop}\label{char1}
A Bernoulli distribution  is palindromic if and only if,  with $\theta_b = \xi_b$  or $\theta_b = \lambda_b$, all 
odd-order linear or log-linear interactions  vanish, that is  if and only if 
$$
\theta_b  = 0,\quad  \text{ for all }  b \subseteq V \text{  with } |b| \text{
odd}.
$$ 
\end{prop}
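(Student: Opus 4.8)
The plan is to work directly from the explicit inversion formulas \eqref{invlini} and \eqref{invloglini}, exploiting the elementary fact that complementing a binary vector $a \mapsto \sim a = \mathbf{1} - a$ shifts the inner product $a \inner b$ by $\mathbf{1} \inner b = |b|$, so that $(-1)^{(\sim a) \inner b} = (-1)^{|b|}(-1)^{a \inner b}$. First I would treat the linear (moment) parameters. For the ``only if'' direction, assume $p(a) = p(\sim a)$ for all $a$; then in \eqref{invlini}, re-indexing the sum by $a \mapsto \sim a$ gives $\xi_b = \sum_a (-1)^{(\sim a)\inner b} p(\sim a) = (-1)^{|b|} \sum_a (-1)^{a \inner b} p(a) = (-1)^{|b|} \xi_b$, so $\xi_b = 0$ whenever $|b|$ is odd. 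For the ``if'' direction, suppose $\xi_b = 0$ for all odd $|b|$; then in the inverse transformation \eqref{eq:lin}, $p(\sim a) = 2^{-d}\sum_b (-1)^{(\sim a)\inner b}\xi_b = 2^{-d}\sum_b (-1)^{|b|}(-1)^{a\inner b}\xi_b$, and since the only surviving terms have $|b|$ even, $(-1)^{|b|} = 1$ and this equals $p(a)$. So the distribution is palindromic.

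Next I would handle the log-linear parameters $\lambda_b$ by exactly the same manipulation applied to \eqref{invloglini} and \eqref{eq:pifromlambda}, with $\log p(a)$ in place of $p(a)$. For ``only if'', $\lambda_b = 2^{-d}\sum_a (-1)^{a\inner b}\log p(a) = (-1)^{|b|}\lambda_b$ under re-indexing by complementation, forcing $\lambda_b = 0$ for odd $|b|$; positivity of all $p(a)$ is what makes $\log p(a)$ well defined here, which is why the blanket assumption $p(a)>0$ matters. For ``if'', with all odd-order $\lambda_b$ zero, $\log p(\sim a) = \sum_b (-1)^{(\sim a)\inner b}\lambda_b = \sum_b (-1)^{|b|}(-1)^{a\inner b}\lambda_b = \sum_b (-1)^{a\inner b}\lambda_b = \log p(a)$, and exponentiating gives $p(\sim a) = p(a)$. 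Thus both parameterizations characterize the palindromic property through the same parity mechanism, and the statement of Proposition~\ref{char1} follows.

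I do not expect a genuine obstacle here: the argument is a change of summation index, and the two cases are formally identical once one observes that both $\xi$ and $\lambda$ are obtained from $\pi$ (respectively $\log\pi$) by the \emph{same} sign matrix $\hcal_d$ with entries $(-1)^{a\inner b}$, and that $\hcal_d$ intertwines complementation on the $a$-index with the diagonal sign operator $\diag((-1)^{|b|})$ on the $b$-index. If I wanted to be slick I would phrase the whole thing once at the level of the matrix identity $\hcal_d P = D \hcal_d$, where $P$ is the permutation $a \mapsto \sim a$ and $D = \diag((-1)^{|b|})$, and then read off both directions and both parameterizations simultaneously; but the per-coordinate version above is already short enough that spelling it out twice costs little. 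The one point to state carefully is that in the log-linear case one needs $\pi > 0$ so that \eqref{eq:lambda} and \eqref{invloglini} are meaningful, which is already part of the standing assumptions.
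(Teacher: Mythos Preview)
Your proposal is correct and follows essentially the same approach as the paper: both hinge on the identity $(-1)^{(\sim a)\inner b} = (-1)^{|b|}(-1)^{a\inner b}$ applied to the formulas \eqref{invlini}, \eqref{eq:lin} and their log-linear analogues. The only cosmetic difference is that for the ``only if'' direction the paper splits the index set $\mathcal{I}$ into two halves paired by complementation, whereas you re-index the full sum via the bijection $a\mapsto\sim a$ to obtain $\xi_b=(-1)^{|b|}\xi_b$ directly; your version is marginally cleaner but the content is identical.
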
 

\begin{proof} 
1) (If $A$ is palindromic then all odd-order $\xi_b = 0$.)
Any linear interaction can be written as  
\begin{equation}
 \xi_b = {\textstyle\sum_{ a \in \mathcal{I}_1}}  (-1)^{a \inner b}\,p(a) 
 + \textstyle{\sum_{a \in \mathcal{I}_1}} (-1)^{(\sim a) \inner b }\,  p(\sim
a) ,
\end{equation} 
where $\mathcal{I}_1$ denotes the subset of  cells having a one as first
element.
Thus $\mathcal{I}_1$ contains half of the cells.
If the distribution is palindromic, $p(\sim a) = p(a)$ and $(-1)^{(\sim a)
\inner b)} = (-1)^{|b|} (-1)^{a \inner b}$. Thus,    
\begin{equation}
 \xi_b = {\textstyle\sum_{a \in \mathcal{I}_1}}  (-1)^{a \inner b}\,p(a)
 +(-1)^{|b|}   {\textstyle \sum_{a \in \mathcal{I}_1}}  (-1)^{a \inner b}p(a).
\end{equation} 
When $|b|$ is odd then $(-1)^{|b|} = -1$ and $\xi_b = 0$;  see also  \citet[App. C]{edwards2000}.

2) (If all odd-order $\xi_b=0$, then $p(\sim a) = p(a)$.) 
If all odd-order interactions vanish, then
\begin{equation}
p(a) =  \frac{1}{2^d}   \textstyle{\sum_{b \in \mathcal{I}_\mathrm{even}}}
(-1)^{a \inner b}\, \xi_b, 
\end{equation}
where $\mathcal{I}_\mathrm{even}$ is the subset of the cells $b$ such that  
 $|b|$ is even. Thus,   
\begin{equation}
p(\sim a) =\frac{1}{2^p} {\textstyle  \sum_{b \in \mathcal{I}_\mathrm{even}}} 
(-1)^{(\sim a)\inner b}\, \xi_b 
= \frac{1}{2^p}   {\textstyle  \sum_{b \in \mathcal{I}_\mathrm{even}}}   (-1)^{
|b|}  (-1)^{a \cdot b} \,\xi_b =p(a),
\end{equation}
because $|b|$ is even. So the distribution is palindromic. 

3) The same arguments apply for the log-linear parameterization. The
distribution is palindromic if and only if  
     $\log p(a) = \log p(\sim a)$ for all $a$. Therefore, using equation 
      \eqref{invloglini}, and the previous lines of reasoning,   
      $\lambda_b = 0$ whenever $|b|$ is odd. Conversely, 
if all odd-order log-linear parameters  $\lambda_b$ vanish, then 
from
$$
\log p(a) =\textstyle \sum_{b \in \mathcal{I}}   (-1)^{a\inner b}\lambda_b
$$
we get  $\log p(a) = \log p(\sim a)$ and the distribution is palindromic.
\end{proof}
By equation~\eqref{eq:mix} and Proposition~\ref{char1}, the joint distribution
of $A$ is palindromic if and only if all the odd-order 
moments of $D = (-1)^A$ are zero. Also, as the palindromic property is characterized by linear constraints on the canonical pararameters $\lambda$,  we have the following result.
\begin{cor}
	Palindromic Bernoulli  distributions are  a regular  exponential family.
\end{cor}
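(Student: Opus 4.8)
The plan is to verify directly that the palindromic subfamily is cut out of the full Bernoulli exponential family by linear (affine, in fact homogeneous) equality constraints on the canonical parameter, and that this is exactly the situation in which a regular exponential family restricts to a regular exponential family of lower dimension. Recall from equation~\eqref{eq:pifromlambda} that, writing $\lambda$ for the log-linear vector, the full Bernoulli model with strictly positive cell probabilities is the regular exponential family
\begin{equation*}
p(a) = \exp\big[\,\textstyle\sum_{b \in \mathcal{I}} (-1)^{a\inner b}\lambda_b\,\big],
\end{equation*}
with canonical parameter space $\Lambda = \reals^{2^p-1}$ (after dropping the redundant $\lambda_\emptyset$, which is fixed by normalization) and sufficient statistic the vector with entries $(-1)^{a\inner b}$, $|b|\ge 1$.

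First I would invoke Proposition~\ref{char1}: $p(a)$ is palindromic if and only if $\lambda_b = 0$ for every $b\subseteq V$ with $|b|$ odd. Hence the palindromic distributions are obtained by restricting $\lambda$ to the linear subspace $\Lambda_0 = \{\lambda \in \Lambda : \lambda_b = 0 \text{ for all } |b| \text{ odd}\}$, which is simply the coordinate subspace $\reals^{k}$ with $k = \#\{b\subseteq V : |b| \text{ even}, |b|\ge 2\} = 2^{p-1}-1$. Substituting $\lambda_b=0$ for odd $|b|$ into the display above leaves
\begin{equation*}
p(a) = \exp\big[\,\textstyle\sum_{b \in \mathcal{I}_{\mathrm{even}}} (-1)^{a\inner b}\lambda_b\,\big],
\end{equation*}
which is again of exponential-family form, now with canonical parameter $(\lambda_b)_{|b|\text{ even},\,|b|\ge 1}$, reduced sufficient statistic $\big((-1)^{a\inner b}\big)_{|b|\text{ even},\,|b|\ge 2}$, and natural parameter space all of $\reals^{k}$, which is open. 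The cumulant function is the restriction of the original (analytic, convex) cumulant function to the subspace $\Lambda_0$, hence is itself analytic and convex; and the reduced statistic is not supported on any affine hyperplane of $\reals^k$ (nondegeneracy is inherited because the full statistic is minimal and the coordinate projection of its support still affinely spans $\reals^k$ — one can exhibit enough binary vectors $a$ to see this, or simply note that the map $\lambda\mapsto\pi$ is injective on $\Lambda_0$ by Proposition~\ref{char1}). Therefore the family is a regular (indeed minimal, linear) exponential family of order $k=2^{p-1}-1$.

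The only point that needs a word of care — and the main obstacle, such as it is — is the minimality/nondegeneracy of the reduced representation: one must check that after deleting the odd-order coordinates the remaining statistics $\big((-1)^{a\inner b}\big)_{|b|\text{ even}}$ are still affinely independent as functions on the (full) sample space $\mathcal{I}$, so that the parameter $(\lambda_b)_{|b|\text{ even}}$ is identified and the parameter space is genuinely an open subset of $\reals^k$ rather than of a lower-dimensional set. This is immediate from the fact that the columns of $\hcal_d$ are mutually orthogonal, so any subset of them, in particular those indexed by even $b$, is linearly independent; equivalently, the one-to-one correspondence $\pi \leftrightarrow \lambda$ of equation~\eqref{eq:lambda} restricts to a one-to-one correspondence between palindromic $\pi$ and $\lambda\in\Lambda_0$. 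With this in hand the conclusion follows from the standard characterization of regular exponential families (e.g.\ \citet{nielsen1978}), and nothing further is required.
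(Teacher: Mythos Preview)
Your proof is correct and follows the same approach as the paper, which treats the corollary as an immediate consequence of Proposition~\ref{char1}: since the palindromic property amounts to linear (zero) constraints on the canonical parameters $\lambda$, the resulting subfamily is a regular exponential family. Your write-up supplies the details (openness of the restricted parameter space, minimality via orthogonality of the columns of $\hcal_d$) that the paper leaves implicit.
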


 We show next a similar characterization for the multivariate 
logistic parametrization. 
\begin{prop}\label{char2}
A Bernoulli distribution  is   palindromic 
if and only if all  odd-order multivariate logistic parameters vanish, that is  if
and only if 
$$
\eta_b  = 0,\quad  \text{ for all }  b \subseteq V \text{  with } |b| \text{
odd}.
$$ 
\end{prop}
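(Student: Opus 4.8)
The plan is to reduce this to Proposition~\ref{char1} via the structure already spelled out in the text: the multivariate logistic parameter $\eta_b$ is, by definition \eqref{eq:mlogit}, the top-order log-linear parameter $\lambda^b_b$ computed \emph{inside} the marginal distribution $p_b(a_b)$. So I would argue that ``all odd-order $\eta_b$ vanish'' is equivalent to ``every marginal $p_M(a_M)$ is palindromic'', and then invoke Propositions~\ref{marginalizePBD} and~\ref{char1} to close the loop with the joint distribution being palindromic.

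First, the forward direction. Suppose $p(a)$ is palindromic. By Proposition~\ref{marginalizePBD}, for any $b\subseteq V$ the marginal $p_b(a_b)$ is again palindromic, so by Proposition~\ref{char1} (log-linear case) applied \emph{within that marginal}, every odd-order log-linear interaction of $p_b$ vanishes; in particular $\lambda^b_b = 0$ whenever $|b|$ is odd, i.e.\ $\eta_b = 0$ for all odd $|b|$.

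Next, the converse, which is where the real work sits. Assume $\eta_b = 0$ for every $b$ with $|b|$ odd. I want to conclude every marginal — in particular the full joint — is palindromic. The clean way is an induction on $|M|$: the claim ``$p_M$ is palindromic'' is trivial for $|M|\le 1$ (the single-variable marginal has $\eta_{\{v\}}=\lambda^{\{v\}}_{\{v\}}=0$, forcing the uniform distribution, which is palindromic by definition). For the inductive step, fix $M$ with $|M|\ge 2$ and assume every proper marginal $p_{M'}$, $M'\subsetneq M$, is palindromic; I must show $p_M$ is palindromic. Work entirely inside $p_M$ and use Proposition~\ref{char1}: it suffices to show every odd-order log-linear interaction $\lambda^M_b$ (for $b\subseteq M$) vanishes. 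If $b\subsetneq M$, then $\lambda^M_b$ is \emph{not} in general equal to the marginal parameter $\lambda^b_b=\eta_b$ — the text explicitly warns about this. To handle it, I reccompute $\lambda^M$ from $p_M$ via \eqref{invloglini}; since by the inductive hypothesis $p_{M'}$ is palindromic for every $M'\subsetneq M$, Proposition~\ref{char1} already gives that all \emph{proper-subset} odd-order interactions of those smaller marginals vanish — but I actually want them as interactions of $p_M$. The simplest rescue: the only log-linear parameter of $p_M$ that is \emph{new} relative to all proper marginals is the top one, $\lambda^M_M = \eta_M$, which is zero by hypothesis if $|M|$ is odd. So instead I run the induction the other way: the collection $\{\eta_b : b\subseteq M\} = \{\lambda^b_b : b\subseteq M\}$ determines $p_M$ (the map $T$ is a diffeomorphism, so $\eta$ restricted to subsets of $M$ determines $p_M$), and setting all odd-order $\eta_b$ to zero is a set of constraints that I claim forces palindromicity directly. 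Concretely: the family of palindromic distributions on $M$ is exactly the zero set $\{\lambda^M_b = 0 : |b|\text{ odd}\}$ by Proposition~\ref{char1}; this is a linear subspace in $\lambda^M$-coordinates of dimension equal to the number of even-order subsets of $M$, which equals $2^{|M|-1}$. Under the diffeomorphism $T$, this maps to a submanifold of the $\eta$-space. I claim this submanifold is precisely $\{\eta_b = 0 : |b|\text{ odd}\}$. One inclusion is the forward direction just proved (palindromic $\Rightarrow$ all odd $\eta_b=0$). For the reverse inclusion I compare dimensions: $\{\eta_b=0:|b|\text{ odd}\}$ also has dimension $2^{|M|-1}$ (same count of even-order subsets), and since $T$ is a diffeomorphism it maps the image of the palindromic set onto a subset of the odd-$\eta$-zero-set of the same dimension; if that set is connected (it is — it's cut out by setting some coordinates to zero in the image of $\Lambda=\mathbb{R}^{2^p-1}$, inheriting connectedness) the two coincide. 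I would make this rigorous by noting $T$ restricted to $\Lambda_{\mathrm{even}} := \{\lambda : \lambda_b=0,\ |b|\text{ odd}\}$ is a diffeomorphism onto its image, the image lies in $E_{\mathrm{even}} := \{\eta : \eta_b = 0,\ |b|\text{ odd}\}$, both have the same dimension, and $E_{\mathrm{even}}$ is itself the image under $T$ of \emph{some} subset of $\Lambda$; matching these via the global diffeomorphism $T$ forces $T(\Lambda_{\mathrm{even}}) = E_{\mathrm{even}}$.

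The main obstacle is exactly this dimension/diffeomorphism argument: unlike the log-linear and linear cases, one cannot read off palindromicity from the $\eta_b$ by a one-line sign computation, because $\eta_b$ for $|b|<d$ lives in a \emph{marginal} and the marginalization map mixing $\pi$ is not diagonal in $\eta$-coordinates. I expect the cleanest exposition is the induction-on-$|M|$ route combined with Proposition~\ref{marginalizePBD}: assume all odd $\eta_b=0$; for the top parameter $|b|=d$ odd we get $\lambda_d = \eta_d = 0$ directly; for every proper subset $b$, by induction $p_b$ is palindromic (its own $\eta$'s are a sub-collection of ours, all odd ones zero), so in particular — and this is the point that needs care — the joint $p$ has $\log p(a) = \sum_b (-1)^{a\inner b}\lambda_b$ with $\lambda_d=0$ and with the lower $\lambda_b$'s constrained through the marginals; pushing this through shows $\log p(a) = \log p(\sim a)$. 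I would present it as: palindromicity of all proper marginals plus $\eta_d = \lambda_d = 0$ when $d$ is odd (and no constraint needed when $d$ is even) implies palindromicity of the joint, citing Proposition~\ref{char1} for the ``$\lambda_b=0$ for proper $b$ with $|b|$ odd'' facts that the inductive hypothesis hands us.
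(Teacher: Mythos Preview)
Your forward direction and your diffeomorphism/dimension-count argument for the converse are exactly the paper's proof: the paper shows $T(\lambda_{\rm even},\lambda_{\rm odd}=0)=(\eta_{\rm even},\eta_{\rm odd}=0)$ from Propositions~\ref{marginalizePBD} and~\ref{char1}, then invokes that $T$ is a diffeomorphism and that the zero-set $E_0=\{\eta_{\rm odd}=0\}\subset E$ is connected to conclude $T^{-1}(E_0)=\{\lambda_{\rm odd}=0\}$. Your version of this (equal dimensions, one inclusion established, connectedness of $E_{\rm even}$) is the same argument with the topology spelled out slightly more.

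Where you should be careful is the induction-on-$|M|$ alternative you keep returning to. You correctly flag the obstruction --- palindromicity of a proper marginal $p_b$ controls the \emph{marginal} log-linear parameters $\lambda^b_c$, not the \emph{joint} ones $\lambda_c$ --- but the final paragraph then asserts ``pushing this through shows $\log p(a)=\log p(\sim a)$'' without actually bridging that gap. As written, the inductive hypothesis does \emph{not} hand you $\lambda_b=0$ for proper odd $b$; it hands you $\xi_b=0$ for proper odd $b$ (since the $\xi$'s are marginal parameters). The missing step is then: with $\xi_b=0$ for all proper odd $b$ and $\lambda_V=\eta_V=0$ when $|V|$ is odd, use the \emph{mixed parameterization} $(\xi_{\mathcal P(V)\setminus V},\lambda_V)$ --- which the paper cites from Barndorff-Nielsen and which is precisely how Kauermann's map $T$ is built --- to identify the distribution uniquely as the palindromic one. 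Once you say that, the induction closes; but without it, the inductive route is incomplete. Since the paper's own argument already rides on $T$ being a composition of these mixed-parameter moves, the ``two'' approaches are really the same idea, and you should drop the induction sketch and present the diffeomorphism argument cleanly.
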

The analogous result for the larger class of complete hierarchical marginal log-linear parameterizations, \cite{BerRud2002}, will be discussed elsewhere.

 The proof uses  a  transformation $T\colon \lambda
\mapsto   \eta$ introduced by 
\citet[p.~265]{kauermann1997}. The composition of smooth one-to-one
transformations $T_M$ gives  $T$, for each nonempty subset $M \subseteq V$. 
The functions $T_M$ operate on parameter transformations between the canonical and the moment parametrizations, as follows.
If $M = V$,
$$
T_M(  \lambda_{\mathcal{P}(V)\setminus V)}, \lambda_V) = 
(  \xi_{\mathcal{P}(V)\setminus V)}, \lambda_V).
$$ 
If  $M\subset V, |M| \ne 1$:
$$
T_M(\dots,   \xi_{\mathcal{P}(V)\setminus V)}, \xi_M, \dots)= (\dots, 
\xi_{\mathcal{P}(M)\setminus M)}, \eta_M, \dots)
$$ 
and the remaining parameters, which are not listed, are  left unchanged.
Finally, if $|M| = 1$, 
$$
T_M(\xi_M, \dots) = (\eta_M, \dots).
$$

For instance, to clarify, to get $  \eta = T(  \lambda)$ for three variables we define
$$
T(  \lambda) = T_1 \circ T_2 \circ T_3 \circ T_{12} \circ T_{13} \circ T_{23} \circ T_{123}(  \eta)
$$
and Table~\ref{tab:tm} gives the details of the  required transformations $T_M$.
 \begin{table}
 \centering
 \caption{The sequence of transformations  required to obtain the multivariate logistic parameter $ \eta$ from the log-linear parameter $ \lambda$. }   
  \label{tab:tm}
 $$
 \begin{array}{ccccccccc}\hline
 \text{Transformation} & \multicolumn{7}{c}{\text{Parameters}}& \text{Intermediate result}\\ \hline
 T_{123}( \lambda) & \xi_1& \xi_2 &  \xi_{12}& \xi_{3}& \xi_{13}& \xi_{23}& \eta_{123} &  \theta^{(1)}	\\
  T_{23}( \theta^{(1)}) & \xi_1& \xi_2& \xi_{12}& \xi_{3}& \xi_{13}& \eta_{23}& \eta_{123} &  \theta^{(2)}\\  
    T_{13}( \theta^{(2)}) & \xi_1& \xi_2& \xi_{12}& \xi_{3}& \eta_{13}& \eta_{23}& \eta_{123} &  \theta^{(3)}\\ 
        T_{12}( \theta^{(3)}) & \xi_1& \xi_2& \eta_{12}& \xi_{3}& \eta_{13}& \eta_{23}& \eta_{123} &  \theta^{(4)}\\ 
      T_{3}( \theta^{(4)}) & \xi_1& \xi_2& \eta_{12}& \eta_{3}& \eta_{13}& \eta_{23}& \eta_{123} &  \theta^{(5)}\\  
   T_{2}( \theta^{(5)}) & \xi_1& \eta_2& \eta_{12}& \eta_{3}& \eta_{13}& \eta_{23}& \eta_{123} &  \theta^{(6)}\\    
    T_{1}( \theta^{(6)}) & \eta_1& \eta_2& \eta_{12}& \eta_{3}& \eta_{13}& \eta_{23}& \eta_{123} &  \eta\\ [1mm]   \hline     
 \end{array}
 $$ 
\end{table}  
 
\begin{proof}
Let ${\rm odd} = \{b \in \{0,1\}^{|V|}: |b| \;{ \rm  odd}\}$ be the subset of
all odd-order interactions. 
Then, below we show that $  \eta_{\rm odd} =   0$ if and only if $\b
\lambda_{\rm odd} =   0$.

From Proposition~\ref{char1} we know that a binary distribution is palindromic  if
and only if $  \lambda_{\rm odd} =   0$. If $  \pi$ is palindromic then all
the marginal distributions $p_b(a_b), b \subseteq V$ are palindromic and thus
in each of them, such that $|b|$ is odd,   $\lambda_b^b = \eta_b = 0$. Thus,
$  \eta_{\rm odd} =   0$. Let ${\rm even} = \mathcal{P}(V) \setminus {\rm
odd}.$ Then 
\begin{equation}
    T(  \lambda_{\rm even},   \lambda_{\rm odd} =   0) = 
    (  \eta_{\rm even},   \eta_{\rm odd} =   0). \label{eq:kau}
\end{equation}
Conversely, if $  \eta_{\rm odd} =   0$, let $   \eta_{\rm even}$ be 
arbitrarily chosen such as $  (  \eta_{\rm even},   \eta_{\rm odd}) =   0)
\in E_0 \subset E$ (the parameter space of the $  \eta$s). As
$E_0$ is connected we can directly use equation \eqref{eq:kau} and the
smoothness of the inverse transformation $T^{-1}$ to get 
$$
T^{-1} (  \eta_{\rm even},   \eta_{\rm odd}=  0) = (  \lambda_{\rm even},
  \lambda_{\rm odd} =   0), 
$$ 
and thus the distribution $  \pi$ is palindromic.
\end{proof}
\noindent Table~\ref{tab:exa} illustrates  the different
parameters with a $2^3$  table. 
\begin{table}[htp!] 
\caption{Illustration of the  the different parameters with a $2^3$  table; constant terms  omitted.} \label{tab:exa}
$$ 
\begin{array}{rcccccccc} \hline
\text{cells }  b: & 000 & 100 & 010 & 110 & 001 & 101 & 011 & 111\\
80\,{ \pi}:  &  15& 9& 1&15& 15 &1 &9&15\\
\text{subsets of $V$ :} & \emptyset & 1 & 2 & 12 & 3 & 13 & 23 & 123\\
 \xi: & - &0&0& 1/2& 0 & -1/5&1/5&0\\
 \lambda: & - &0&0&\log (5)/2 & 0 & -1/5&1/5&0\\
  \eta: &-& 0&0& \log(3)/2  &0&  -\log(3)/2 & \log(3)/2
&0\\[1mm]
\hline
\end{array}
$$
\end{table}

Next, we state a result connected with 
  binary probability distributions \emph{generated by a linear triangular system}, as studied  in  \cite{wermarcox2009}. Their joint probabilities  
 may be defined by the recursive  factorization
$$
\pr(A_1 = a_1, \dots, A_d =a_d) =\pr(A_1= a_1) \textstyle{\prod_{s=2}^{d}} \pr(A_s = a_s \mid  A_{1} = a_{1}, \dots, A_{s-1} = a_{s-1})
$$
with uniform margins. With $\beta_{sj}$  denoting linear regression coefficients,
\begin{equation}
	 \pr(A_s = a_s \mid  A_{1} = a_{1}, \dots, A_{s-1} = a_{s-1}) = 
 \half(1 + \textstyle\sum_{1 = s-1}^d \beta_{sj}(-1)^{a_s + a_j} ).
 \label{eq:maineffects}
\end{equation}
The conditional expected values of $A_s$ given variables $A_{[s-1]} = (A_{1}, \dots, A_{s-1})$ are linear regressions  with only main effects and no constant term. For these distributions,  all the even-order linear interactions are  known  functions of  the marginal correlations. Here, we
 prove in Appendix~1  the following result and get back to such systems later.
\begin{prop}\label{prop:tri}
If   a binary probability distribution is generated with a linear triangular system, then it is palindromic. 
\end{prop}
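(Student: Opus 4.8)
The plan is to show that a binary distribution generated by a linear triangular system has all odd-order log-linear interactions equal to zero, and then invoke Proposition~\ref{char1}. The most direct route, however, goes through the moment parameters $\xi_b$ rather than the $\lambda_b$, because the recursive structure in \eqref{eq:maineffects} is expressed in terms of the transformed variables $D_v = (-1)^{A_v}$ and their conditional expectations, and by \eqref{eq:mix} the $\xi_b$ are precisely the moments $E\big(\prod_{v\in V} D_v^{b_v}\big)$. So I would first rewrite \eqref{eq:maineffects} in terms of $D$: setting $\beta_{s1},\dots,\beta_{s,s-1}$ as the coefficients, the conditional law of $D_s$ given $D_{[s-1]}$ satisfies $E(D_s \mid D_{[s-1]}) = \sum_{j<s} \beta_{sj} D_j$, a linear function of the earlier $D_j$'s with no constant term, and $\mathrm{var}(D_s\mid D_{[s-1]}) = 1 - \big(\sum_{j<s}\beta_{sj}D_j\big)^2$. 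The uniform-margin assumption gives $E(D_s) = 0$ for each $s$.

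Next I would prove by induction on $s$ the statement: for every subset $b \subseteq \{1,\dots,s\}$ with $|b|$ odd, $E\big(\prod_{v\in b} D_v\big) = 0$. The base case $s=1$ is $E(D_1)=0$. For the inductive step, take $b\subseteq\{1,\dots,s\}$ with $|b|$ odd. If $s\notin b$ this is immediate from the inductive hypothesis. If $s\in b$, write $b = \{s\}\cup c$ with $c\subseteq\{1,\dots,s-1\}$ and $|c|$ even, and condition on $D_{[s-1]}$:
\begin{equation*}
E\Big(\prod_{v\in b} D_v\Big) = E\Big(\prod_{v\in c} D_v \cdot E(D_s\mid D_{[s-1]})\Big)
 = \sum_{j<s}\beta_{sj}\, E\Big(D_j\prod_{v\in c} D_v\Big).
\end{equation*}
Each term $E\big(D_j \prod_{v\in c} D_v\big)$ is a moment of the $D$'s indexed by the multiset $\{j\}\cup c$ restricted to $\{1,\dots,s-1\}$; since $D_v^2 = 1$, this moment equals $E\big(\prod_{v\in b'} D_v\big)$ where $b' \subseteq\{1,\dots,s-1\}$ is the symmetric difference of $\{j\}$ and $c$, and $|b'|$ has the same parity as $1+|c| = 1 + (\text{even}) = $ odd. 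By the inductive hypothesis every such term vanishes, so $E\big(\prod_{v\in b}D_v\big)=0$. Taking $s=d$ shows all odd-order $\xi_b$ vanish, and Proposition~\ref{char1} then gives that the distribution is palindromic.

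The step I expect to require the most care is the bookkeeping in the inductive step: the product $D_j \prod_{v\in c} D_v$ collapses via $D_v^2=1$ either to a product over $c\setminus\{j\}$ (if $j\in c$, of cardinality $|c|-1$, odd) or to a product over $c\cup\{j\}$ (if $j\notin c$, of cardinality $|c|+1$, odd), and in both cases the index set lies in $\{1,\dots,s-1\}$ and has odd cardinality, so the inductive hypothesis applies. One should also note the edge case $j\in c$ giving the empty product when $|c|=0$, but $|c|$ is even and if $|c|=0$ then $b=\{s\}$ and $E(D_s\mid D_{[s-1]}) = \sum_{j<s}\beta_{sj}D_j$ has expectation $\sum_j \beta_{sj} E(D_j) = 0$ directly. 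Apart from this, the argument is a clean conditioning-and-induction and no heavy computation is needed; the positivity assumption $p(a)>0$ is not even required here, only in order to pass to $\lambda_b$, but since we argue via $\xi_b$ and the equivalence in Proposition~\ref{char1} is stated for $\xi$ as well, this is not an issue.
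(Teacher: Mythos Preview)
Your proof is correct and follows essentially the same inductive route as the paper: both arguments proceed by induction on $s$, use the linear form of $E(D_s\mid D_{[s-1]})$ from \eqref{eq:maineffects}, and rely on the symmetric-difference identity $D_j\prod_{v\in c}D_v=\prod_{v\in c\triangle\{j\}}D_v$ to see that an even-order index combined with a singleton yields an odd-order index. The only difference is presentational: the paper expands the full probability $\pr(A_{[s]}=a)$ as a linear combination in $(-1)^{a\inner b}$ and multiplies out, whereas you compute each odd-order moment $\xi_b=E\big(\prod_{v\in b}D_v\big)$ directly via the tower property, which is a bit more economical but not a genuinely different argument.
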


\section{Independences and dependences}

Palindromic Bernoulli distributions share some but not all of the properties of joint Gaussian distributions.
We elaborate here on  properties of independences, of undirected dependences, also called
associations, and of directed dependences, also called effects. Conditional independence of variables $A, B$
given variable $O$, say, is written as  $A\ci B|O$, while the complement of it, called conditional dependence of $A, B$ given $O$,  is written as  $A\pitchfork B|O$; see \cite{wersad2012}. 

Starting with properties of general Bernoulli distributions, 
several measures of dependence are equivalent with respect to independences and the sign of a dependence; see \citet[thm. 1]{xiemageng2008}. The same happens for Gaussian distributions.
   In addition, for bivariate palindromic Bernoulli distributions, many  measures of dependence are even in  one-to-one correspondence; see e.g. \citet{WerMar2014}. 
   For the three variable table in Section~1,  conditional measures of main interest for $(A_1, A_2)$ at level 0  of $A_3$ are

 \begin{center}
 $(\alpha \delta)/ (\beta \gamma)$, the  odds-ratio,\\
$\{ \delta /(\gamma+\delta)\}  - \{ \beta/(\alpha +\beta) \}$,  the chance difference for success,  \\
$\{ \delta /(\gamma+\delta)\}  /\{ \beta/(\alpha +\beta) \}$,  the relative chance for success  and\\
$\rho_{12|k=0}=(\alpha \delta - \beta \gamma)\{ (\alpha +\beta) (\gamma+\delta)  (\alpha +\gamma)   (\beta+\delta)\}^{-\fracshalf}$,  the conditional correlation.
 \end{center}
 Thus, the chance difference for success and the conditional correlation $\rho_{12|k=0}$ are multiples of the cross-product difference, $\alpha \delta - \beta \gamma$, and coincide if and only if the
 probabilities for success are identical for $A_1$ and $A_2$.  Moreover, the odds-ratio is equal to the relative chance  if and only if $A_1 \ci A_2|A_3$, that is  if and only if $\alpha \delta=\beta \gamma$. A dependence is positive if $\rho_{12|k=0}$ (or the chance difference) is positive or the odds-ratio (or the relative chance) is $>1$.  
 
It can be verified that $\rho_{12|k=0} = \rho_{12|k=1}$,  and given the linear form of the conditional expectation $E(A_1, A_2 \mid A_3 = k) = \rho_{12|k}$ it follows from  \citet[thm. 1]{baba2004} that the constant conditional  correlation $\rho_{12|k}$ equals the partial correlation 
$$
\rho_{12.3} = (\rho_{12} - \rho_{13}\rho_{23})/c\quad  \text{ where } c = \{(1 - r_{13}^2)(1 - r_{23}^2)\}^{1/2}.
$$ 
 Special one-to-one relations among the mentioned different measure of dependence are given in the following Proposition, proved in Appendix 1.
 \begin{prop}\label{prop:triv}
 In a trivariate palindromic distribution:
 \begin{itemize}
\item[\rm (i)] $A_1 \ci  A_2 \mid A_3 \iff \lambda_{12} = 0 \iff \rho_{12|3} = 0 \iff \rho_{12.3} = 0$, 
\item[\rm (ii)] $A_1 \pitchfork  A_2 \mid A_2 > 0  \iff \lambda_{12} >  0 \iff \rho_{i12|3} > 0 \iff \rho_{12.3} > 0$, 
\item[\rm (iii)] $A_1 \pitchfork  A_2 \mid A_3 < 0 \iff \lambda_{12} < 0 \iff \rho_{12|3} < 0 \iff \rho_{12.3} < 0$. 
  \end{itemize}
  \end{prop}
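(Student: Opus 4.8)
\textbf{Proof plan for Proposition~\ref{prop:triv}.}
The plan is to establish each of the three equivalences by proving a single chain of iff-statements and then specializing to the sign, since (ii) and (iii) follow from (i) together with monotonicity. I would organize the work around three links: (a) $A_1 \ci A_2 \mid A_3 \iff \lambda_{12}=0$; (b) $\lambda_{12}=0 \iff \rho_{12|3}=0$; (c) $\rho_{12|3}=0 \iff \rho_{12.3}=0$. The discussion already preceding the proposition supplies almost all the algebra; the task is mainly to assemble it cleanly for the palindromic case.

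For link (a), I would work in the level-$k$ subtable of $(A_1,A_2)$ given $A_3=k$ displayed in Section~\ref{sec:intro}. Using Proposition~\ref{marginalizePBD} the margin $p_{13}$ is palindromic, hence uniform, so conditioning on $A_3=k$ amounts to renormalizing a $2\times 2$ table; $A_1\ci A_2\mid A_3$ is then equivalent to the vanishing of the cross-product difference $\alpha\delta-\beta\gamma$ at that level, which by the computation in the text is the same for $k=0$ and $k=1$. On the parameter side, I would compute $\lambda_{12}$ for the trivariate palindromic distribution from \eqref{invloglini}: since all odd-order $\lambda_b$ vanish (Proposition~\ref{char1}), the joint cell probabilities are determined by the even-order parameters, and a direct evaluation expresses $\lambda_{12}$ as a contrast of $\log\alpha,\log\beta,\log\gamma,\log\delta$ that is (a positive multiple of) $\tfrac14\log\{(\alpha\delta)/(\beta\gamma)\}$ — i.e.\ essentially the conditional log odds-ratio, constant across levels of $A_3$. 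Hence $\lambda_{12}=0$ iff the conditional odds-ratio is $1$ iff $A_1\ci A_2\mid A_3$, and moreover $\lambda_{12}$ and $\alpha\delta-\beta\gamma$ share the same sign, which is what (ii)--(iii) need.

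For link (b), I would use the fact recorded just before the proposition that the conditional correlation satisfies $\rho_{12|k=0}=\rho_{12|k=1}$ and equals $(\alpha\delta-\beta\gamma)$ divided by a strictly positive normalizing factor $\{(\alpha+\beta)(\gamma+\delta)(\alpha+\gamma)(\beta+\delta)\}^{1/2}$; since the denominator never vanishes (positivity of all cell probabilities), $\rho_{12|3}=0$ iff $\alpha\delta-\beta\gamma=0$, and the two quantities share sign. Combining with (a) gives $\lambda_{12}=0\iff\rho_{12|3}=0$ with matching signs. For link (c), I would invoke the already-cited consequence of \citet[thm.~1]{baba2004}: because $E(A_1A_2\mid A_3=k)$ is linear in $k$, the constant conditional correlation $\rho_{12|k}$ equals the partial correlation $\rho_{12.3}=(\rho_{12}-\rho_{13}\rho_{23})/c$ with $c>0$; hence $\rho_{12|3}$ and $\rho_{12.3}$ are equal, so in particular vanish together and share sign. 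Chaining (a)--(c) yields (i), and reading off signs throughout yields (ii) and (iii).

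The main obstacle is the explicit evaluation of $\lambda_{12}$ in link (a): one must show that in a \emph{trivariate} palindromic distribution the joint two-factor log-linear parameter $\lambda_{12}$ genuinely reduces to (a multiple of) the \emph{conditional} log odds-ratio of $A_1,A_2$ given $A_3$, rather than to some mixture of marginal and conditional associations. This is where palindromicity is doing real work: the vanishing of $\lambda_{13},\lambda_{23},\lambda_{123}$ (odd order, plus the three-factor term) is exactly what collapses the generic relationship between joint and conditional/marginal log-linear parameters, and I would make this precise by substituting the palindromic cell probabilities $(\alpha,\beta,\gamma,\delta,\beta,\delta,\gamma,\alpha)$ into \eqref{invloglini} for $b=\{1,2\}$ and simplifying. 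Everything else is either quoted (Baba, the measure-equivalence discussion) or a short positivity argument for the nonvanishing of normalizing constants.
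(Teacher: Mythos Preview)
Your overall strategy is sound and close to the paper's own, but there is one genuine slip in your reasoning for link (a). You write that ``the vanishing of $\lambda_{13},\lambda_{23},\lambda_{123}$ (odd order, plus the three-factor term)'' is what makes $\lambda_{12}$ collapse to the conditional log odds-ratio. This is wrong: $\lambda_{13}$ and $\lambda_{23}$ are \emph{two}-factor terms, hence even-order, and they do \emph{not} vanish in a general trivariate palindromic distribution (see Table~\ref{tab:exa}). Fortunately your actual method---substituting the palindromic cells $(\alpha,\beta,\gamma,\delta,\delta,\gamma,\beta,\alpha)$ into \eqref{invloglini}---is unaffected and gives $\lambda_{12}=\tfrac14\log\{(\alpha\delta)/(\beta\gamma)\}$ correctly. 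The only palindromic input needed here is $\lambda_{123}=0$, which forces the two conditional log odds-ratios to coincide; $\lambda_{13}$ and $\lambda_{23}$ play no role. The paper sidesteps this computation by quoting the standard characterisation $A_1\ci A_2\mid A_3\iff(\lambda_{12}=0\text{ and }\lambda_{123}=0)$ and then dropping the second condition because $\lambda_{123}$ is automatically zero.

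Two further points of comparison. For link (c) you invoke \citet{baba2004}, as the text preceding the proposition does; the paper's proof instead derives $\rho_{12|3}=\rho_{12.3}$ explicitly by rewriting the $2^3$ table with margins expressed through $\rho_{13},\rho_{23}$ and checking that $8(\alpha\delta-\beta\gamma)=\rho_{12}-\rho_{13}\rho_{23}$ directly. For the sign statements (ii)--(iii), your chain of sign-preserving equalities (log is monotone; the normalising factor in $\rho_{12|3}$ is strictly positive; $\rho_{12|3}=\rho_{12.3}$) is clean and arguably more transparent than the paper's route, which argues via the diffeomorphism $(\lambda_{12},\lambda_{13},\lambda_{23})\leftrightarrow(\rho_{12},\rho_{13},\rho_{23})\leftrightarrow(\rho_{12.3},\rho_{13.2},\rho_{23.1})$ and asserts that $\lambda_{12}\mapsto\rho_{12.3}$ is strictly increasing with a unique zero at the origin.
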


 However, even if the chance difference is identical  at all level combinations of the remaining variables, the relative chance for success may vary widely. This shows for the palindromic distribution in Example 3.1,  where  the relative chance at level combination  $(0,0)$ of $A_3, A_4$ is more than 10 times higher than at $(1,1)$. This may become more extreme with equality just in  sign.
 $$
 \small
\begin{array}{r}
\hspace{-1mm} \text{\bf Example 3.1}\!\!\!\\[-1mm]
\\[-1.5mm]
 \end{array}
\left[\begin{array}{r r r r r r r r r r} 
 \text{levels  of  } A_1, A_2, A_3, A_4: & 0000 & 1000 & 0100 & 1100 &0010& 1010 & 0110 & 1110\\[-1mm]
9200 \; \pi:&4095& 91& 91& 47& 91 &47&47  &91\\[-1mm]
\end{array}\right]
$$
The explanation is the presence of a four-factor log-linear 
interaction in the $2^4$ table.

\subsection{Induced dependences and effect reversal} 
Next, we give three examples of binary palindromic  tables, which illustrate what have  been called the weak and the strong versions of the Yule-Simpson paradox.
In Examples 3.2 and 3.3,  an independence  gets destroyed by changing the conditioning sets  and in Example 3.4 the sign of a dependence gets reversed after marginalizing. The three examples illustrate in addition, that in all palindromic Bernoulli distribution not only  
conditional parameters are relevant but also the marginal parameters and, in particular,  
also simple correlations,  due to the one-to-one relation  between an odds-ratio and the correlation in their  bivariate distributions.
In the following examples the symbol $\lambda'$ indicates the log-linear interactions in effect-coding obtained from the counts. They are identical to previous  $\lambda$ parameters except for the 
constant term, where  $\lambda_{\emptyset} = \lambda'_\emptyset - \log n$.
$$\small
\begin{array}{r}
\text{\bf Example 3.2}\\[-1mm]
A\ci B|O\\[-1mm]
 \text{\& }  A\pitchfork B
 \end{array}
\left[\begin{array}{r r r r r r r r r r} 
 \text{levels } ijk \text{ of  } A,B,O: & 000 & 100 & 010 & 110 &001& 101 & 011 & 111\\[-1mm]
100 \; \pi:&32& 8& 8& 2& 2 &8&8 &32\\[-1mm]
\text{\!\!log-lin. interaction } \lambda':&\!\! 2.08& 0& 0& 0& 0& 0.69&0.69&0
\end{array}\right]
$$
$$
\small
\begin{array}{r}
\text{\bf Example 3.3}\\[-1mm]
A\pitchfork B|O  \\[-1mm]
 \text{\& }A\ci B
 \end{array}
\left[\begin{array}{r r r r r r r r r r}
 \text{levels } ijk \text{ of  } A,B,O: & 000 & 100 & 010 & 110 &001& 101 & 011 & 111\\[-1mm]
400 \; {\bf  \pi}:& 90 &60&40&10&10&40&60&90\\[-1mm]
\text{\!\!log-lin. interaction } {\lambda'}:&\!\! 3.65& 0& 0& -0.25& 0& 0.45&0.65&0
\end{array}\right]
$$
$$
\small
\begin{array}{r}
\text{\bf Example 3.4}\\[-1mm]
A\pitchfork B|O \text{ pos.}\\ [-1mm]
 \text{\& }A\pitchfork B \text{ neg.}
 \end{array}
\left[\begin{array}{r r r r r r r r r r}
 \text{levels } ijk \text{ of  } A,B,O: & 000 & 100 & 010 & 110 &001& 101 & 011 & 111\\[-1mm]
400 \; {\bf  \pi}& 100 &50 &40 &10&10&40&50&100\\[-1mm]
\text{\!\!log-lin. interaction } {\lambda'}:&\!\! 3.63& 0& 0& -0.17& 0& 0.52&0.63&0
\end{array}\right]
$$
To understand the examples the following matrices with marginal and partial correlations  are given for the three examples. The variables  are ($A,B,O$) in this  order. The matrices  show  correlations, $\rho_{st}$,   for $(s,t)=(1,2),(1,3), (2,3)$ in the lower triangle and partial correlations, $\rho_{st.v}=-\rho^{st}/\sqrt{\rho^{ss}\rho^{tt}}$, in the upper triangle; $v$ denotes the remaining variable and $\rho^{st}$ is a concentration, that is an element in the inverse covariance  matrix.
{\small
$$    
\begin{array}{ccc}\hline
                           1 & 0 &  0.51 \\
                           0.36 & 1& 0.51 \\ 
                            0.60 &  0.60  & 1\\  \hline
                            \end{array} \quad  
                            \begin{array}{ccc} \hline
                            1 & -0.18 &  0.35 \\
                           0 & 1 & 0.52\\
                     0.30   &  0.50   &      1\\ \hline
                         \end{array} \quad
                         \begin{array}{ccc} \hline
                         1 & -0.13 &  0.41 \\
                           0.10 &1\n & 0.50 \\ 
                            0.40   &  0.50    &  1\\ \hline
                         \end{array} 
$$        
}   
Note that in Example 3.1, we have $\rho_{12}=\rho_{13}\rho_{23}$ and 
in Example 3.2,  $\rho_{12.3}=- \rho_{13.2}\rho_{23.1}$.                      

Though these situations may be surprising when one sees them for the first time, they have simple explanations.  The strong version of the Simpson's paradox in Example 3.4 results for pair $(A,B)$,  say, when there are 
substantial dependences $A\pitchfork B |O$ and $A\pitchfork O|B$, and a strong dependence  $B\pitchfork O$; see \cite[sec. 6]{wer1987}. 

The weak versions are due to a dependence-inducing property.
This  property  is shared by joint Gaussian distributions; see e.g.  \citet[lem. 2.1]{wercox1998} and is  known as  \textit{singleton  transiti\-vity}  when it  is used for graphs representing a large class of graphical Markov models; see \citet[eqs. (10), (11)]{wer2015}.  The property has been studied for  binary variables by \citet[sec. 11]{simpson51}  and \citet{birch1963}, discussion of Equation 5.1, in the form 
\begin{equation} (A\ci B \mid  O  \text{ and } A\ci B) \text{ implies } (A\ci O \text{ or }  B\ci O). \label{eq:singleton}\end{equation}
Equivalently, equation \eqref{eq:singleton} is formulated as dependence-inducing with
$$  (A \pitchfork  O \text{ and }  B\pitchfork O)  \text{ implies }  (A\ci B| O  \text{ or } A\ci B) \text{ but not both},$$
and it applies to triples of variables also when a common conditioning set is added to each statement.
 For  variables $A,O, B$ this shows as in Fig.~\ref{fig:V} in graphs which are {\sf V}s :
 
\begin{figure}[H]
 \includegraphics[scale=0.4]{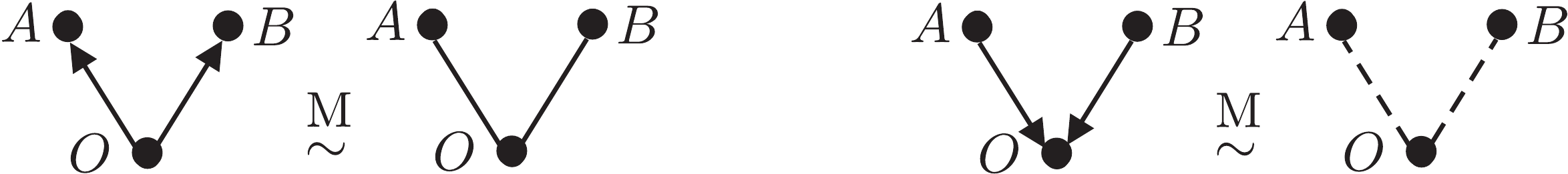}
\caption{\label{fig:V} Pairs of \protect{{\,\sf V}}s for binary $A,B,O$ with non-vanishing dependences associated to each edge present; left: $A\ci B|O$ 
and $A \pitchfork B$ is represented by a source {\sf V} and  by its Markov-equivalent, concentration-graph {\sf V}; right: $A\ci B$ and $A \pitchfork  B|O$ is represented by a sink {\sf V}and by  its Markov-equivalent, covariance-graph {\sf V}.}  
\end{figure}

In our  examples for $(A, B, O)$, the  log-linear interaction vector $\b \lambda$ tells  for Example 3.2 that the concentration graph 
is a {\sf V} since $\lambda_{12}=0$; for Examples 3.3 and 3.4 that it is a complete graph since all  two-factor terms are nonzero. On the other hand, the correlation matrices tell for Example 3.3 that  the covariance graph is a {\sf V}  since $\rho_{12}=0$.

The two weak versions of the Yule-Simpson paradox show that a distribution may have more independences than those displayed 
in its graphical representation. This has also been called its lack of faithfulness to the graph; see \citet{spirglysch2000}.

\subsection{Combination of independences} 
For three binary variables, the combination of independences was first studied by \citet[sec. 5]{birch1963}: 
$(A\ci B|O$ and $A\ci O|B)\implies A\ci BO$, but $(A\ci B$ and $A\ci O)\implies A\ci BO$ holds only when the three-factor interaction is lacking. 

More generally, if independences combine both downwards and upwards in a distribution as explained below, then the complete independence, $a\ci bc$, has several equivalent decompositions. Let $a,b,c$ be disjoint subsets of $\{1, 2, \ldots, d\}$ for the
random variables $X_1, X_2,\ldots X_d$, where each of $a,b,c$ contains at least one element, then 
 \begin{equation} a\ci bc \iff (a\ci b|c \text{ and } b\ci c) \iff (a\ci b \text{ and } a\ci c) \iff  (a\ci b|c \text{ and } a\ci c|b).\label{eq:compint}\end{equation}
 In all probability distribution,  the first equivalence holds but the third and fourth statement are only implied by $a\ci bc$. If the third statement implies $a\ci bc$,
 the independences combine upwards and if the fourth statement implies $a\ci bc$, they combine downward.

 Equation \eqref{eq:compint} is a known property of joint Gaussian distributions; see  \citet[def. 1]{matus2007}, and \citet[app.~2]{MarWer2009}, and  of \textit{traceable regressions}; see \citet[cor.~1]{wer2012}.  In the context of  graphical Markov models,  the upward and downward combination of independences  are called the composition and the intersection property, respectively.  Both are also properties of all currently known probabilistic graphs; see \citet{sadlau2014}.

A general sufficient condition for the downward  combination in discrete distributions are strictly positive 
probabilities, which are assumed in this paper.  In contrast to  Gaussian distributions, in palindromic Bernoulli distributions  independences  need not combine upwards. An extreme form of this is in Example 3.5, where the correlation matrix is the identity matrix, even though the four variables are dependent.
  $$\small
\begin{array}{r}
\hspace{-1mm} \text{\bf Example 3.5}\!\!\!\\[-1mm]
\\[-1.5mm]
 \end{array}
\left[\begin{array}{r r r r r r r r r r} 
 \text{levels  of  } A_1, A_2, A_3, A_4: & 0000 & 1000 & 0100 & 1100 &0010& 1010 & 0110 & 1110\\[-1mm]
 880\; \pi:&100& 10 & 10 & 100& 10 &100&100   &10\\[-1mm]
\end{array}\right]
$$
Here, the independences do not combine upwards since, for instance, both $A_1A_2 \ci A_3$ and $A_1A_2 \ci A_4$ are satisfied, but 
$A_1A_2 \pitchfork A_3A_4$ holds instead of $A_1 A_2 \ci A_3A_4$.
The reason is the reciprocal behaviour of the conditional odds-ratios which implies
that the  only nonzero log-linear interaction  is the four-factor term and that the 
conditional correlations vary with the levels of the third variable and can therefore not coincide with a partial correlation.

Similarly, the equality of conditional correlations may get destroyed with special covariance structures even when there is essentially no log-linear four-factor interaction. In Example 3.6, this happens with a \textit{funnel graph} which generalises the sink {\sf V} to more than two uncoupled nodes pointing to 
a common response; see \cite{LupMarBer2009} for estimation in such covariance graph models.
$$
\small \begin{array}{r}
\hspace{-1mm} \text{\bf Example 3.6}\!\!\!\\[-1mm]
\\[-1.5mm]
\end{array}
\left[\begin{array}{r r r r r r r r r r} 
\text{levels  of  } A_1, A_2, A_3, A_4: & 0000 & 1000 & 0100 & 1100 &0010& 1010 & 0110 & 1110\\[-1mm]
888\; \pi:&87 &24  & 102 & 9 & 60& 51 &87   &24\\[-1mm]
\end{array}\right]
$$
The importance of this type of palindromic structure is that it models studies using 
a  $2^k$ factorial design with equal allocation of the study subjects to all level 
combinations and special  sampling so as to get equal chances of success and 
of failure for a  binary response.

%
%
%
%
%
%
%
%
%

\section{Some special cases}
We discuss now palindromic distributions  arising when a Gaussian 
distribution is median dichotomized. Further, some details concerning 
the specification and estimation of undirected graphical models are given.

\subsection{Median dichotomization}
Let $(X_1,X_2)$ have a joint distribution function $F_{12}$ with
marginal distributions functions $F_1$ and $F_2$. Let further $U_1 = F_1(X_1)$ and $U_2 = F_2(X_2)$ be the
probability integral transforms of $X_1$ and $X_2$, so that $U_1$ and $U_2$ are
uniform.  Also let $\tilde X_j$ be the medians of 
$X_j$, $j=1,2$.
Consider now the median dichotomized variables, \begin{gather}
A_1 = \mathbb{I}[U_1 > \half], \quad A_2 = \mathbb{I}[U_2 > \half]
\label{meddic}
\end{gather}
where $\mathbb{I}[\cdot]$ is  the indicator function.  Then, the joint
distribution of $A_1$ and $A_2$ is a bivariate palindromic Bernoulli distribution, as given
in Section~\ref{sec:intro},
with $ \alpha = P( U_1 >  \half, U_2 > \half)$.

The variables $D_1 = (-1)^{A_1}$ and $D_2 = (-1)^{A_2}$, 
taking values $1, -1$,  have mean zero and unit variance,
so that $\xi_{12}= E(D_1D_2)$,  the  correlation coefficient between
$D_1$ and $D_2$, becomes the \emph{cross-sum difference} of the  joint
probabilities
\begin{gather}
\xi_{12} = 2\alpha - 2\beta = 4\alpha - 1.
\end{gather}
Thus the  correlation between two binary variables, which is
a multiple of the cross-product difference,
coincides in a bivariate palindromic Bernoulli distribution with the cross-sum
difference. This was not noted,
when   $\xi_{12}$ was proposed  as a measure of dependence between 
any two 
random variables $X_1$ and $X_2$ by \citet{blomqvist1950}:
\begin{align*}
4 \alpha -1 &= \pr\big\{(X_1 - \tilde X_1)(X_2 - \tilde X_2)> 0\big \} - 
\pr\big \{(X_1 - \tilde X_1)(X_2 - \tilde X_2)<0 \big\} \\
&= 2 \pr(U_1 \le   \half, U_2 \le  \half ) + 2 \pr(U_1 >   \half, U_2 >  \half )-1.
\end{align*}
\begin{rem}
The probability $\alpha$ may be interpreted as  the copula $C(\half, \half)$ of the random vector $(X_1,X_2)$, where
the function $C(u,v) = \pr(U_1\le u, U_2 \le v))$,  $0 \le u \le 1$,
$0\le v \le 1$.
\end{rem}
With the linear interaction expansion of equation \eqref{eq:lin},  the 
distribution of $D_1, D_2$ is 
\begin{gather}
P(D_1 = i, D_2 = j) = \textstyle \frac{1}{4}(1 + \xi_{12} ij), \qquad i,j = 1,
-1.
\end{gather} 

After median-dichotomizing $d>2$ continuous variables, the resulting
binary variables
$A_v, v = 1, \dots, d$ are still marginally uniform, but their joint
distribution is palindromic only   for centrally symmetric
variables, that is when
$X_v - \tilde X_v$ has the same distribution as $ -(X_v - \tilde X_v)$ for each
$X_v, v = 1, \dots d$. 

With $d = 3$,  
the joint distribution of the  median-dichotomized variables is  palindromic
with parameters $\alpha, \beta, \gamma$ and $\delta$,  as given
in section~\ref{sec:intro}. Their marginal  correlations are
$$
\xi_{12} = 4\alpha + 4 \delta -1, \quad 
\xi_{13} = 4\alpha + 4\gamma -1, \quad \xi_{23} = 4\alpha + 4\beta -1
$$
and the joint probability distribution is, with $i,j,k = 1, -1$. 
$$
P(D_1 = i, D_2 = j, D_3 = k) = \textstyle \frac{1}{8}( 1 + 
\xi_{12} ij + \xi_{13} ik + \xi_{23}{jk}) .
$$

\begin{exa}\label{ex:orth}
The following example gives the orthant probabilities of a trivariate, mean-centred Gaussian 
distribution having equal  correlations: $ -1/2 < \rho < 1$. The joint probability vector of the median-dichotomized variables is:
 $$8  \pi =(1 + 3\xi,  \quad 1-\xi, \quad 1-\xi,  \quad  1-\xi, \quad 1-\xi, \quad 1-\xi,  \quad  1-\xi, \quad 1 + 3\xi) $$
and the explicit transformations between the three types of parameters result with
\begin{equation}
 \xi = \frac{2}{\pi} \arcsin \rho, \quad 
\lambda = \frac{1}{4} \log  \frac{1+ 3\xi}{1-\xi}, 
\quad \eta = \mathrm{atanh} \,\xi. \label{eq:explicit}
\end{equation}  
The arcsin transformation is due to \cite{sheppard1898} and the obtained   distribution is  a \emph{concentric ring model}; see \cite{wermarzwier2014}.
\end{exa}

\begin{prop}
If $X$ has a $d$-variate Gaussian distribution with mean zero and 	correlation matrix $  R = [\rho_{st}]$ for $s, t = 1, \dots, d$ and  $A$ is the binary random vector obtained by median dichotomizing $X$, with linear interaction parameters $\xi_b$, then $  R$ can be reconstructed from the   correlation matrix $  R_A =[\xi_{st}]$ between the binary variables $A$  by
$$
\rho_{st} = \sin\{ (\pi/2) \xi_{st}\},\quad  s, t = 1, \dots, d.
$$ 
\end{prop}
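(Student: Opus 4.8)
The plan is to reduce the multivariate statement to the bivariate one already derived in Section~4.1 via Sheppard's formula. First I would observe that each linear interaction $\xi_{st}$ (for $s \ne t$) is, by equation~\eqref{eq:mix}, the moment parameter $E(D_s D_t)$ in the marginal distribution of $(A_s, A_t)$, and that the moment parameter is marginally defined (the paragraph following~\eqref{eq:mix} states explicitly that $\xi_b^M = \xi_b$). Hence $\xi_{st}$ depends only on the bivariate marginal distribution of $(A_s, A_t)$, which is the median dichotomization of the bivariate Gaussian $(X_s, X_t)$ with correlation $\rho_{st}$.

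Next I would recall from Section~4.1 that $D_s = (-1)^{A_s}$ and $D_t = (-1)^{A_t}$ have mean zero and unit variance, so $\xi_{st} = E(D_s D_t)$ is literally the correlation coefficient between $D_s$ and $D_t$, and that the bivariate median-dichotomized Gaussian is a palindromic Bernoulli distribution with $\alpha = \pr(U_s > \half, U_t > \half)$, giving $\xi_{st} = 4\alpha - 1$. Then I would invoke Sheppard's arcsine formula, quoted in Example~\ref{ex:orth} as $\xi = (2/\pi)\arcsin\rho$, applied to the pair $(X_s, X_t)$: this yields $\xi_{st} = (2/\pi)\arcsin\rho_{st}$. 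Inverting gives $\rho_{st} = \sin\{(\pi/2)\,\xi_{st}\}$, which is exactly the claimed reconstruction, valid entrywise for all $s \ne t$; the diagonal entries are trivially $\rho_{ss} = 1 = \sin(\pi/2)$ since $\xi_{ss} = \xi_\emptyset$-type normalization gives $1$ (or one simply restricts to $s \ne t$).

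The only genuine content beyond bookkeeping is Sheppard's formula itself, which relates the orthant probability $\pr(X_s > \tilde X_s, X_t > \tilde X_t)$ of a mean-zero bivariate Gaussian to its correlation; since this is classical and already cited in the excerpt via~\cite{sheppard1898}, I would simply quote it rather than reprove it. So the expected main obstacle is essentially presentational: making clear that the reconstruction is possible because each off-diagonal entry of $R$ is recovered \emph{separately} from the corresponding bivariate margin, so that no information about higher-order structure of the joint Gaussian is needed — the map $R \mapsto R_A$ acts entrywise through the monotone bijection $\rho \mapsto (2/\pi)\arcsin\rho$ on $(-1,1)$, and its entrywise inverse is $\xi \mapsto \sin\{(\pi/2)\xi\}$. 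I would close by noting in one line that monotonicity of $\arcsin$ guarantees $R_A$ determines $R$ uniquely, so the reconstruction is well-defined.
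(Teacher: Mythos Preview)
Your proposal is correct and follows essentially the same approach as the paper, which simply states that the proof ``results by inverting the arcsin transformation of the quadrant probability $\xi_{st} = 4\,\pr(X_s\le 0, X_t\le 0) -1 = 2\pi^{-1}\arcsin \rho_{st}$.'' Your version is more detailed in justifying the reduction to bivariate margins via the marginal nature of the moment parameters, but the core argument---Sheppard's formula applied pairwise and then inverted---is identical.
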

The proof results by inverting the arcsin transformation of the quadrant probability
$$
\xi_{st} = 4\,\pr(X_s\le 0, X_t\le 0) -1 = 2\pi^{-1}\arcsin \rho_{st}.
$$  
As a consequence one may  reconstruct the original correlations 
from the palindromic Bernoulli distribution derived via the orthant probabilities.

\subsection{Maximum likelihood estimation}
 For  a palindromic Bernoulli distribution, given a random sample of size $n$, one has 
as  counts, that is as  observed cell frequencies: $n(a)$, $a \in \mathcal{I}$. The likelihood is
\begin{equation}
\textstyle{\prod_{a \in \mathcal{I}}} p(a)^{n(a)} =\textstyle{ \prod_{a \in \mathcal{I}_{0}}}
p(a)^{n(a)} p(\sim a)^{n(\sim a)}
= \textstyle{\prod_{a \in \mathcal{I}_0}}  p(a)^{n(a) + n(\sim a)}
\end{equation} 
where $\mathcal{I}_0$ is the set of half of the cells $a$ such that
$a_1 = 0$.
The sufficient statistics are thus the set  of the $2^{d-1}$ frequencies $n(a)
+ n(\sim a)$, obtained by summing each cell and its complement image. 
The maximum likelihood estimate  of a cell probability (or of a cell count) 
is the average of the two proportions (or of counts): 
\begin{equation}
\hat{p}(a) = \{n(a) + n(\sim a)\}/ (2 n), \quad 
\hat{n}(a) = \{n(a) + n(\sim a)\}/2. \label{eq:mle}
\end{equation}  
This produces a symmetrized vector of counts. 
 
For palindromic Bernoulli distributions, Wilks' likelihood ratio test statistic is  
\begin{equation}
w = 2 \;{ \textstyle{\sum_{a \in \mathcal{I}}}} n(a) \log \left( \frac{2\,n(a)}{n(a) +
n(\sim a)}\right).
\end{equation}
It has an asymptotic $\chi^2$ distribution with $2^{d-1}$ degrees of freedom;
see \citet[app. C]{edwards2000}.
The maximum likelihood estimates  of the linear interaction parameters 
are 
$$
\hat \xi_{b} = \begin{cases}
 0 & \text{if } |b| \text{ odd}, \\
 \sum_{a \in \mathcal{I}}  (-1)^{a \inner b} \, n(a)/n. &     \text{if } |b|
\text{ even}.
 \end{cases}
$$

Thus, the estimated   $\hat \xi_b$, for $|b|$ even, matches the observed moment statistic   
and for $|b|$ odd is zero. 
 For $|b|=2$, the  estimated marginal correlation, $\hat \xi_{12}$  coincides with the correlation coefficient in the fitted table 
$\hat{p}(a)$, hence is a cross-sum difference of the counts
\begin{equation}
\hat \xi_{12} = (n_{00}+ n_{11})-  (n_{01}+ n_{10}) \label{eq:est}.
\end{equation}
Since the log-linear and the multivariate logistic parameters are in a one-to-one relation to the linear interactions, the maximum likelihood estimates of their parameter vectors, result by the same transformations that hold for the  parameters; see \cite{fisher1922}. For the special transformations that apply here, see equations \eqref{eq:pifromlambda}, \eqref{eq:lin}.

In the following,  we speak of  maximum likelihood estimates  simply as `estimates'. Estimates may simplify further, when the distribution satisfies independence constraints in such
a way that they lead to a graphical Markov model; see, for an overview  of these models, e.g., \cite{DarLauSpe1980}, \cite{haberman1973}, \cite{wer2015}. 
\begin{exa}[A Markov chain] Let $A_1, A_2$ and $A_3$ be three binary random variables where $A_1$ and $A_3$ are conditionally independent
given $A_2$, so that the probabilities satisfy
$$p_{ijk} = p^{-1}_{+j+}\, p_{ij+}\,p_{+jk} \text{ for } i,j,k = \pm 1 \,.$$
Its undirected graph, called a concentration graph,  $1 \ful 2 \ful 3$, has a missing edge for nodes 1 and 3, representing $A_1$ and $A_2$, and it is  a  simplest type of a graphical Markov model, a  Markov chain in 3 variables; see also Example 3.1.
 
The  log-linear parameters are  constrained by
$\lambda_{13} = \lambda_{123} = 0$ for the conditional independence of pair $(1,3)$. If, in addition,   the distribution is palindromic,  the odd-order parameters are zero so that also $\lambda_1 = \lambda_2 = \lambda_3 = 0$.
In general Bernoulli distributions, the  minimal sufficient statistics are the observed counts corresponding to the \emph{cliques} of the graph,
i.e., the  maximal complete subsets of the nodes,  here just the node pairs (1,2) and (2,3). 
However for a palindromic Bernoulli distribution, the minimal sufficient statistics 
are the estimated counts   $\hat n_{ij+}$ and  $\hat n_{+jk}$ for margins $(1,2)$ and $(2,3)$, defined as in equation \eqref{eq:mle} from the symmetrized table, so that
$$ 
\tilde p_{ijk} = n^{-2}\,\hat n_{ij+} \hat n_{+jk}\text{ for }  i,j,k = \pm 1. 
$$
where $\tilde p_{ijk}$ is the estimate of $p_{ijk}$ under the Markov chain model. 
\end{exa} 

This example illustrates  how independence constraints,  conditionally given all remaining variables, 
simply add to the linear constraints on canonical parameters of a palindromic Bernoulli distribution. Moreover, when the model is decomposable, since its concentration graph is chordal, see e.g. \cite{DarLauSpe1980},  it can be generated  by a  linear triangular system; see end of Section~\ref{sec:charact}. 

As far as the maximum likelihood estimation of palindromic graphical models is concerned, 
the hierarchical constraints of conditional independence and the non-hierarchical constraints of central symmetry are well compatible with one another. Thus one can fit a given graphical model to the symmetrized counts or equivalently symmetrize the fitted counts 
under the model.

\section{Palindromic Ising models}\label{sec:ising}

 We now introduce palindromic Ising models, especially when combined with conditional independence constraints.  Ising models are joint Bernoulli distributions without any higher than  two-factor log-linear interactions. An Ising model  is palindromic if it has also  uniform margins. As mentioned before, this leads for the  $(-1,1)$ coding to  binary variables which have  zero means, unit variances and  covariances coinciding with Pearson's correlations.    
 General Ising models have, for instance, been studied  as lattice systems,   \cite{besag1974}, and  as binary quadratic-exponential distributions, \cite{coxwer1994b}.

The concentration graph of an Ising model in $d$ variables has $d$ nodes and at most one undirected edge coupling a node pair.   The  edge  $(i,j)$   in this 
concentration graph is missing  if the  two-factor log-linear interaction  of pair $(i,j)$ vanishes. Each missing edge $(i,j)$ means $i\ci j|\{1,\ldots, d\}\setminus\{i,j\}$.  
Simpler independence statements  such as  $i\ci j|C$, for $C \subset \{1, \ldots, d\}\setminus \{i,j\}$  result  if every path
between $i$ and $j$ has a node in $C$; see for instance  \cite{DarLauSpe1980}. Recall that  nodes and edges of the cliques of the graph form its maximal complete subgraphs, that is those node subsets without any missing edge which become incomplete when one more node is added.

\begin{prop} If the concentration graph of a palindromic Bernoulli distribution has
largest clique size three, then it is a palindromic Ising model. \end{prop}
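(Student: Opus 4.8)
The plan is to combine the log-linear characterisation of the palindromic property from Proposition~\ref{char1} with the standard interaction-vanishing description of the concentration graph for strictly positive distributions, so that ``nothing of order three or more survives''.

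First I would invoke Proposition~\ref{char1}: a palindromic Bernoulli distribution has $\lambda_b = 0$ for every $b$ with $|b|$ odd. In particular all main effects $\lambda_v$ vanish, so the margins are uniform, and \emph{all} three-factor terms $\lambda_{vst}$ vanish as well --- this is the step where the palindromic hypothesis does the real work, since a triangle in the concentration graph would otherwise permit a nonzero three-factor interaction.

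Next I would use that, because $p(a) > 0$ for all $a$, the distribution is pairwise --- hence, by the Hammersley--Clifford theorem, globally --- Markov with respect to its concentration graph $G$, and therefore its log-linear expansion satisfies $\lambda_b = 0$ whenever $b$ is not a complete subset of $G$; see e.g.\ \citet{DarLauSpe1980} or \citet{haberman1973}. Since the largest clique of $G$ has size three, no subset of the node set with four or more elements is complete, and hence $\lambda_b = 0$ for every $b$ with $|b| \ge 4$.

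Putting the two observations together, $\lambda_b = 0$ for every $b$ with $|b|$ odd and for every $b$ with $|b| \ge 4$; as there is no even order strictly between $2$ and $4$, this leaves only $\lambda_\emptyset$ and the two-factor interactions $\lambda_{ij}$ possibly nonzero. Thus the distribution has no log-linear interaction of order higher than two and has uniform margins, i.e.\ it is a palindromic Ising model, as defined at the start of this section. The only point requiring care is the passage from the concentration-graph (pairwise) Markov property to the vanishing of all higher-order log-linear interactions over non-complete subsets, which rests on the strict positivity of $p(a)$ assumed throughout.
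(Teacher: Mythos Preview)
Your argument is correct and follows the same route as the paper: use Proposition~\ref{char1} to kill all odd-order $\lambda_b$, and use the Markov structure of the concentration graph (which the paper phrases in terms of ``generating cliques'' rather than explicitly citing Hammersley--Clifford) to kill all $\lambda_b$ with $|b|\ge 4$, leaving only the two-factor terms. Your version is in fact more explicit than the paper's, which leaves the factorization step implicit.
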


\begin{proof}
An unconstrained   palindromic Bernoulli distribution has a  complete concentration graph in nodes $\{1, \dots,d\}$.
Each removed edge introduces an independence  and reduces  the size of a  generating clique. If the largest
clique-size is three, then each  of the $2^3$ generating probabilities   for the  joint palindromic distribution 
 has  no  3-factor log-linear interaction. 
  \end{proof}
  
\begin{prop}
Palindromic Ising  models are closed under marginalizing for $d\leq 4$.	
\end{prop}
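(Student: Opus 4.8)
The plan is to reduce the assertion to two results already established: Proposition~\ref{marginalizePBD}, that every marginal of a palindromic Bernoulli distribution is again palindromic, and Proposition~\ref{char1}, that a Bernoulli distribution is palindromic precisely when all its odd-order log-linear interactions vanish. First I would fix a palindromic Ising model $p(a)$ on $V=\{1,\dots,d\}$ with $d\le 4$ and an arbitrary $M\subseteq V$. If $M=V$ the marginal is the model itself and there is nothing to prove, so suppose $M\subsetneq V$; then $|M|\le d-1\le 3$. Since the cell probabilities are assumed strictly positive, so are those of $p_M(a_M)$, and its log-linear parameters $\lambda^M_b$, $b\subseteq M$, are well defined.

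Next I would show that $p_M$ is automatically an Ising model. By Proposition~\ref{marginalizePBD} the distribution $p_M$ is palindromic, hence by Proposition~\ref{char1} all of its odd-order log-linear parameters vanish. Because $|M|\le 3$, the only subsets $b\subseteq M$ with $|b|\ge 3$ are those with $|b|=3$, which occur only when $|M|=3$ and are then of odd order; hence $\lambda^M_b=0$ for every $b$ with $|b|\ge 3$. Thus $p_M$ carries no log-linear interaction of order three or higher and is palindromic, i.e. it is a palindromic Ising model. Equivalently, one may marginalize one variable at a time and iterate, each step landing on at most three variables, where ``palindromic'' already forces ``Ising''.

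The one point that genuinely needs care --- and the reason the statement is confined to $d\le 4$ --- is that log-linear parameters are \emph{joint}, not marginal, parameters, so one cannot argue that the pairwise interactions of $p_M$ coincide with those of $p$ while the higher ones stay zero; marginalizing can and does create higher-order log-linear terms. What rescues the proof is purely dimensional bookkeeping: a proper marginal of a model on at most four variables lives on at most three, and on at most three variables every palindromic Bernoulli distribution is a palindromic Ising model, since the only log-linear interaction of order $\ge 3$ it could have, the three-factor term, is of odd order and vanishes by Proposition~\ref{char1}. For $d\ge 5$ this safeguard is lost: marginalizing to four variables can produce a nonzero \emph{four}-factor log-linear interaction, which, being of even order, is not suppressed by the palindromic property (compare the four-factor term appearing in Example~3.1), so the Ising property generally fails beyond $d=4$. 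No computation is required anywhere in the argument; only the count of which interaction orders can occur in a marginal of a given size.
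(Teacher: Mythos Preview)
Your proof is correct and follows essentially the same route as the paper's: a proper marginal of a model on $d\le 4$ variables has at most three variables, it is palindromic by Proposition~\ref{marginalizePBD}, and on at most three variables the only log-linear interaction of order $\ge 3$ is the three-factor term, which vanishes by Proposition~\ref{char1}. Your write-up is in fact more explicit than the paper's brief argument, and your remark on why the bound $d\le 4$ is sharp anticipates the paper's subsequent comment about marginalizing a star graph on five nodes.
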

\begin{proof}
In an Ising model, all higher than 2-factor log-linear interactions are zero and all trivariate and bivariate 
marginal distributions are Ising models because palindromic distributions are closed under marginalizing.
\end{proof}

 For palindromic Ising models of more than four variables, cliques of even order may get  induced by marginalizing together 
 with corresponding even-order log-linear interactions. This happens, for instance, if the concentration graph is a star graph with edges
$is$, for $i = 1, \dots, 4$,  and marginalizing is over the common source $s = 5$.  Notice however that, in any case, the bivariate  
and trivariate marginal probabilities remain palindromic Ising models.

\begin{prop} 
For a decomposable palindromic Ising model  with largest clique size three, the maximum likelihood estimates are obtained in closed form from the marginal correlations in the symmetrized $2\times 2$ tables within its 2-node and 3-node cliques.  \label{thm1} 
\end{prop}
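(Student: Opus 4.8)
The plan is to assemble three ingredients that are already in place: the symmetrization formula \eqref{eq:mle} for the maximum likelihood estimate of a palindromic distribution, the classical closed form for the estimate of a decomposable graphical model as a product of clique marginals over separator marginals, see \cite{DarLauSpe1980}, and the observation from Proposition~\ref{char1} that the palindromic restriction is a set of \emph{linear} constraints on the canonical parameter $\lambda$, hence compatible with the (also linear in $\lambda$) conditional independence constraints that define the Ising graph. By the proposition preceding this one, a palindromic Bernoulli distribution whose concentration graph has largest clique size three is automatically a palindromic Ising model, so the hypothesis is exactly that the graph is chordal with cliques of size at most three.

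First I would use decomposability to write every distribution in the model as $p(a)=\prod_{C}p_C(a_C)\big/\prod_{S}p_S(a_S)^{\nu(S)}$ over the cliques $C$ and separators $S$ of the concentration graph, with $\nu(S)$ the separator multiplicities. Since the joint distribution is palindromic, Proposition~\ref{marginalizePBD} gives that each clique marginal $p_C$ is palindromic, and by the inversion \eqref{eq:lin} restricted to the even-order $b$ that alone survive Proposition~\ref{char1}, a palindromic marginal on at most three variables is reconstructed in closed form from its pairwise linear interactions alone: $p_C(a_C)=\tfrac18\bigl(1+\sum_{s<t\in C}\xi_{st}(-1)^{a_s+a_t}\bigr)$ when $|C|=3$, and $p_C(a_C)=\tfrac14\bigl(1+\xi_{st}(-1)^{a_s+a_t}\bigr)$ when $C=\{s,t\}$, while separators of size at most one give uniform or trivial marginals. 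Thus the whole model distribution is a closed-form function of the pairwise interactions indexed by edges lying inside cliques.

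Next I would identify the minimal sufficient statistics of the combined model and show that fitting the graph commutes with symmetrizing the table. A short computation shows that passing to the symmetrized counts $\hat n(a)=\{n(a)+n(\sim a)\}/2$ of \eqref{eq:mle} leaves every even-order linear interaction of the data unchanged and sends every odd-order one to zero, because $(-1)^{(\sim a)\inner b}=(-1)^{|b|}(-1)^{a\inner b}$. Hence the distribution obtained by fitting the decomposable Ising model to the symmetrized table, namely $\tilde p(a)=\prod_C\hat p_C(a_C)\big/\prod_S\hat p_S(a_S)^{\nu(S)}$ with each $\hat p_C$ the palindromic clique marginal of the symmetrized data, lies in the model: it is Markov with respect to the graph by construction, it is palindromic because a product of palindromic clique factors divided by palindromic separator factors is again palindromic, and it therefore has no log-linear interaction of order higher than two. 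Since the model is a regular exponential family cut out by zero constraints on $\lambda$, the log-likelihood is strictly concave in its canonical parameter, so the in-model distribution whose retained sufficient statistics — the even-order clique-marginal counts — reproduce those of the data is the unique maximum likelihood estimate. Each $\hat p_C$ is in turn the palindromic reconstruction from the symmetrized $2\times 2$ sub-table correlations $\hat\xi_{st}$ of \eqref{eq:est}, which gives the claimed closed form.

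The main obstacle is precisely this middle step: verifying that the non-hierarchical central-symmetry constraints and the hierarchical conditional independence constraints are genuinely compatible, so that the minimal sufficient statistic of the combined model really is the family of symmetrized clique-marginal counts and the proposed $\tilde p$ both lies in the model and matches those statistics. I would handle it either through the common linear-in-$\lambda$ description of the two constraint sets, or by the direct verification sketched above that $\tilde p$ is Markov, palindromic, and moment-matching; the remaining computations — the inversion \eqref{eq:lin} within each clique and the identification of $\hat\xi_{st}$ with a cross-sum difference of symmetrized counts — are routine and parallel the Markov chain example of the previous section.
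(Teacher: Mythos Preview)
Your proposal is correct and follows essentially the same route as the paper: decomposability gives the clique--separator factorization (the paper cites \cite{sundberg1975} rather than \cite{DarLauSpe1980}), each clique or separator marginal of size at most three is a palindromic Ising model determined by its pairwise $\xi_{st}$, and these are read off the symmetrized $2\times 2$ marginals. You are in fact more careful than the paper's proof on the one point you flag as the main obstacle: the paper simply asserts, in the surrounding text, that the hierarchical independence constraints and the non-hierarchical palindromic constraints are ``well compatible'' so that one may fit the graph to the symmetrized counts, whereas you sketch the exponential-family moment-matching argument that actually justifies this.
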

\begin{proof} 
If a model is decomposable, then an ordering of the cliques $C_t$, $t = 1, \dots, T$ can be found, such that the joint probability factorizes and estimation simplifies; see \cite{sundberg1975}.
This ordering satisfies the \emph{running intersection property} meaning that the 
sets $S_1 = C_1 \cap (\cup_{t>1} C_t)$, $S_2 = C_2 \cap (\cup_{t>2} C_t),$ \dots, $S_{T-1} = C_{T-1} \cap C_T$, called separators, are all complete, that is all nodes are coupled by an edge. Then, the joint probability $p(a)$ factorizes into the  product of the marginal distributions over cliques $C_t$
divided by the product of the the marginal distributions over the separators $S_t$. 

As cliques and separators have largest size $\le 3$, and the associated marginal distributions  are palindromic Ising models, they are fitted in closed form directly from the marginal $2\times 2$ tables of the symmetrized counts by using for instance the marginal correlations of such tables, that is the cross-sum differences.       
\end{proof} 

Example 5.1 below gives three non-decomposable palindromic Ising models having a  
so-called chordless four-cycle.  They have maximal clique size two and edges for
$(1,3)$, $(1,4)$, $(2,3)$, $(2,4)$.  They differ in that all nonzero log-linear interactions are positive in the first and negative in the second case. In the third one,  there is a chordless cycle in the concentration graph as well as in the covariance graph, that is not only the two independences of the concentration graph hold, $1\ci 2|34$ and $3\ci 4|12$, but, in addition, $1\ci 2$ and  $3\ci 4$. 
 $$\small
\begin{array}{r}
\hspace{-1mm} \text{\bf Example 5.1}\!\!\!\\[-1mm]
\\[-1.5mm]
 \end{array}
\left[\begin{array}{r r r r r r r r r r} 
 \text{levels  of  } A_1, A_2, A_3, A_4: & 0000 & 1000 & 0100 & 1100 &0010& 1010 & 0110 & 1110\\[-1mm]
 336\; \pi:&75& 15& 15 &3 & 15 & 15 &15& 15\\[-1mm] 
 336 \; \pi:&3& 15 & 15 & 75 & 15 & 15 &15 &15      \\[-1mm] 
 336\; \pi:&35 & 35& 7 & 7  &  7& 35 &7& 35\\[-1mm] 
\end{array}\right]
$$
In these chordless cycles, the missing edges in 
the concentration graph show also as zeros in the matrix of partial correlations
given all remaining variables that is  in $-\rho^{ij}/\sqrt{\rho^{ii}\rho^{jj}}$.
This points to the possible extension of  a result by  \cite{lohwain2013}  to 
include chordless cycles for palindromic Ising models.
\begin{prop} There is no effect reversal in a palindromic Ising model if all its nonzero log-linear interactions are positive.
\end{prop}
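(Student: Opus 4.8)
The plan is to work with the linear (moment) parameters $\xi_b$, which are especially convenient for palindromic Ising models because only the two-factor interactions $\xi_{st}$ can be nonzero, all odd-order ones vanish by Proposition~\ref{char1}, and all higher even-order ones vanish because the distribution is Ising. Thus, writing $D_v=(-1)^{A_v}$ with $D_v\in\{-1,1\}$, equation~\eqref{eq:lin} gives the explicit form
$$
p(a)=2^{-d}\Big(1+\textstyle\sum_{s<t}\xi_{st}\,d_sd_t\Big),\qquad d_v=(-1)^{a_v},
$$
and the hypothesis is that every $\xi_{st}\ge 0$ (and $\xi_{st}=0$ exactly for the missing edges of the concentration graph). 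The statement "there is no effect reversal" should be read in the sense of the Yule--Simpson discussion of Section~3 and of \citet{lohwain2013}: marginalizing or conditioning on any subset of the variables cannot flip the sign of any two-variable dependence; equivalently, every marginal correlation, every conditional correlation, and every partial correlation among the $A_v$ is nonnegative, and likewise no pairwise log-linear interaction in any margin becomes negative.

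The key steps would be as follows. First, I would record that for a palindromic Ising model the covariance matrix of $D$ is $\Sigma = I + \Xi$ where $\Xi=[\xi_{st}]$ has zero diagonal and nonnegative off-diagonal entries; positivity of all $p(a)$ forces $\Sigma$ to be positive definite. Second, for \emph{conditional} correlations: by Proposition~\ref{char1} applied to any conditional slice (which, as the Markov-chain example and the "well compatible" remark at the end of Section~4 show, is again obtained by adding linear constraints on $\lambda$), or more directly from the quadratic-exponential form with only two-factor terms, the conditional correlation of $(A_s,A_t)$ given any configuration of the rest is an increasing function of the single log-linear interaction $\lambda_{st}\ge 0$ — here Proposition~\ref{char1} plus the Ising restriction guarantee $\lambda_{st}$ and $\xi_{st}$ have the same sign — so all conditional correlations are $\ge 0$. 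Third, for \emph{partial} correlations: these are controlled by the off-diagonal signs of $\Sigma^{-1}=(I+\Xi)^{-1}$, and one shows these are all $\le 0$, so that $-\rho^{st}/\sqrt{\rho^{ss}\rho^{tt}}\ge 0$; this is exactly the statement that $I+\Xi$ is an inverse $M$-matrix, which holds because $\Xi$ has nonnegative entries and $I+\Xi$ is positive definite (a Neumann-series or $M$-matrix argument: $(I+\Xi)^{-1}=\sum_{k\ge 0}(-\Xi)^k$ need not converge, so one instead uses the standard characterization that a symmetric positive definite matrix with nonnegative off-diagonal entries has an inverse with nonpositive off-diagonal entries, via the Schur-complement/elimination recursion). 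Fourth, for \emph{marginal} quantities: Proposition~\ref{marginalizePBD} says a margin is still palindromic, and since $\xi^M_{st}=\xi_{st}\ge 0$ (moment parameters are marginal parameters, equation~\eqref{eq:mix}), all marginal correlations are nonnegative; monotonicity of the bivariate odds-ratio in the correlation for a palindromic $2\times 2$ table (the one-to-one relation cited from \citet{WerMar2014}) then gives that no marginal pairwise log-linear interaction turns negative. Finally, I would combine these into the single statement that sign is preserved under every marginalization and every conditioning — i.e.\ no effect reversal — and note that this is precisely the behaviour that fails in Example~3.4 only because that distribution is not Ising.

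The main obstacle is the partial-correlation step, i.e.\ proving that $(I+\Xi)^{-1}$ has nonpositive off-diagonal entries. A clean Neumann-series argument would require a spectral-radius bound that need not hold here, so the argument has to go through the $M$-matrix/Schur-complement elimination: one eliminates variables one at a time and checks that the Schur complement of a positive definite matrix with nonnegative off-diagonal part again has nonnegative off-diagonal part (the new entry $\Xi_{st}-\Xi_{sr}\Xi_{rt}/(1+\Xi_{rr})$ stays $\ge 0$ because $\Xi_{rr}=0$ and one must control the cross term — this is where a little care with the diagonal and with strict positivity of the pivots is needed) and stays positive definite. Equivalently one can invoke the known fact that the class of symmetric positive-definite matrices with nonnegative off-diagonal entries is closed under inversion up to the sign flip, which is the matrix content of the Loh--Wainwright result that Example~5.1 suggests extends to chordless cycles; the Ising (two-factor-only) hypothesis is exactly what reduces the general palindromic case to this tractable linear-algebra fact.
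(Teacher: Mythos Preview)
Your argument rests on a claim that is false: that in a palindromic Ising model the linear (moment) parameters $\xi_b$ vanish for all $|b|\ge 4$. The Ising assumption kills the higher-order \emph{log-linear} interactions $\lambda_b$, not the moment parameters; the Discussion says this explicitly (``a palindromic Ising model may have only log-linear two-factor interactions as non-vanishing canonical parameters, while in their linear-model formulations higher-order interactions may be present''). A concrete counterexample: with $d=4$, $\lambda_{12}=\lambda_{34}=c>0$ and all other $\lambda_{ij}=0$, the distribution factorizes over $\{1,2\}$ and $\{3,4\}$, so $\xi_{1234}=E(D_1D_2)E(D_3D_4)=\xi_{12}\xi_{34}\neq 0$. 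Hence the product form $p(a)=2^{-d}(1+\sum\xi_{st}d_sd_t)$ does not hold, and everything built on it collapses. Relatedly, the hypothesis of the proposition is $\lambda_{st}\ge 0$, not $\xi_{st}\ge 0$; you cannot start from the latter.

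There is a second, independent gap. Your partial-correlation step asserts that a symmetric positive-definite matrix with nonnegative off-diagonal has an inverse with nonpositive off-diagonal. This is false: the $3\times 3$ correlation matrix with $\xi_{12}=\xi_{23}=1/2$ and $\xi_{13}=0$ is positive definite, yet the Schur-complement update you yourself write down gives $\xi_{13}-\xi_{12}\xi_{23}=-1/4<0$, so the $(1,3)$ entry of the inverse is positive. The inverse-$M$-matrix implication runs the other way.

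The paper's proof takes a completely different and much shorter route: a palindromic Ising model with all $\lambda_{st}\ge 0$ is totally positive of order two (MTP$_2$), and the absence of effect reversal then follows directly from Proposition~3.4(ii) of \citet{FalLauEtal2016}. Total positivity is exactly what delivers sign preservation under both marginalizing and conditioning; it does not pass through the moment parameterization at all.
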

This result is a direct consequence  of Proposition 3.4 $(ii)$
in \cite{FalLauEtal2016} for totally positive  palindromic Ising models. We conjecture that the same 
holds, when all nonzero log-linear interactions in a palindromic Ising model are negative, such as
in our second case  of Example 5.1.

\section{A case study}\label{sec:case}

The following case study illustrates some of the obtained results. 
For a sample of grades obtained at the University of Florence, we aim  at predicting grades in  Physics in terms of 
given grades in Algebra, Analysis and Geometry. The passing grades range in each subject from 18 to 31. 
We  use sums of grades over  exams  in three successive years and have data for
$n=78$ students who reached in each of the subjects a sum of at least 60 points. Instructors expect  positive 
correlations for each pair of these grades and no sign reversal for the correlations at fixed level
combinations of the other variables. The data are  in Appendix 2.

The four summed grades are closely bell-shaped, each of their scatter plots shows a nearly
elliptic form as well as the plots of residual pairs obtained after linear least squares regression of  each grade on the other three.
There is also no evidence for nonlinear relations in the  probability plots of  \cite{coxwer1994}. Thus, there is
substantive and empirical support for assuming  a  joint Gaussian
distribution.
   
After replacing  for pairs (1,4) and (2,4)  the observed correlations by $\hat{ r}_{14}=r_{13}r_{34}$, $\hat{r}_{24}=r_{23}r_{34}$, in Table \ref{tab:concex1}
we have  the estimate of the correlation matrix, which has  zeros for pairs (1,4) and (2,4) in its inverse,  in its concentration matrix; see
e.g. \cite{wermarcox2009}, equation (2.8). 

\begin{table}[htp!]
\centering 
\caption{For four fields and 78 students, observed marginal correlations, $r_{ij}$ (below the diagonal), concentrations  on  the diagonal 
and partial correlations, $r_{ij.kl}$ (above the diagonal).}\label{tab:concex1}
\begin{tabular}{lrrrr}
\hline
 & Analysis & Algebra & Geometry &Physics\\ \hline
1:=Analysis & $  2.64  $ & $0.27$ & $0.34$ & $0.17$  \\
2:=Algebra & $  0.72$ & $3.03$ & $0.51$ & $0.04$ \\
3:=Geometry &  $0.76$&  $0.80$& $4.07$ & $0.38$ \\
4:=Physics &   $0.62$& $0.60$& $0.71$&   $2.09$ \\ \hline
\end{tabular}
\end{table}

Wilks' likelihood-ratio test statistic on 2 degrees of freedom shows with a value of $w = 2.8$ a good fit
to the model with  generating sets $\{\{1,2,3\}, \{3,4\}\}$. This implies conditional independence of the grade in Physics from those in Analysis and Algebra given the grade in Geometry. This follows directly,  for instance, with the corresponding concentration graph, on the left of Fig.~\ref{fig:conc}. It has 
the cliques   $\{1,2,3\}$ and $\{3,4\}$ for which  node  3 separates node 4 from nodes 1,2 since to reach nodes 1,2 from node 4, one has to pass via node 3.

Similarly, after replacing the marginal correlations for pairs $(1,2)$, $(1,3)$ and $(2,3)$ by their average $\hat{r}=0.76$,  we have for the submatrix of (1,2,3)
the conventional estimate of an equicorrelation matrix; see \citet[Section 3]{olkinpratt1958}. This is here well-fitting since $w = 3.4$  on $2$ degrees of freedom.  The grade in Physics, correlates with this sum score as $0.706$, even slightly less than  with the grade  in Geometry alone, where $r_{34}=0.709$. This is plausible in view of the well-fitting  Markov structure. 
\begin{figure}[htp!]
\centering
\includegraphics[scale=.38]{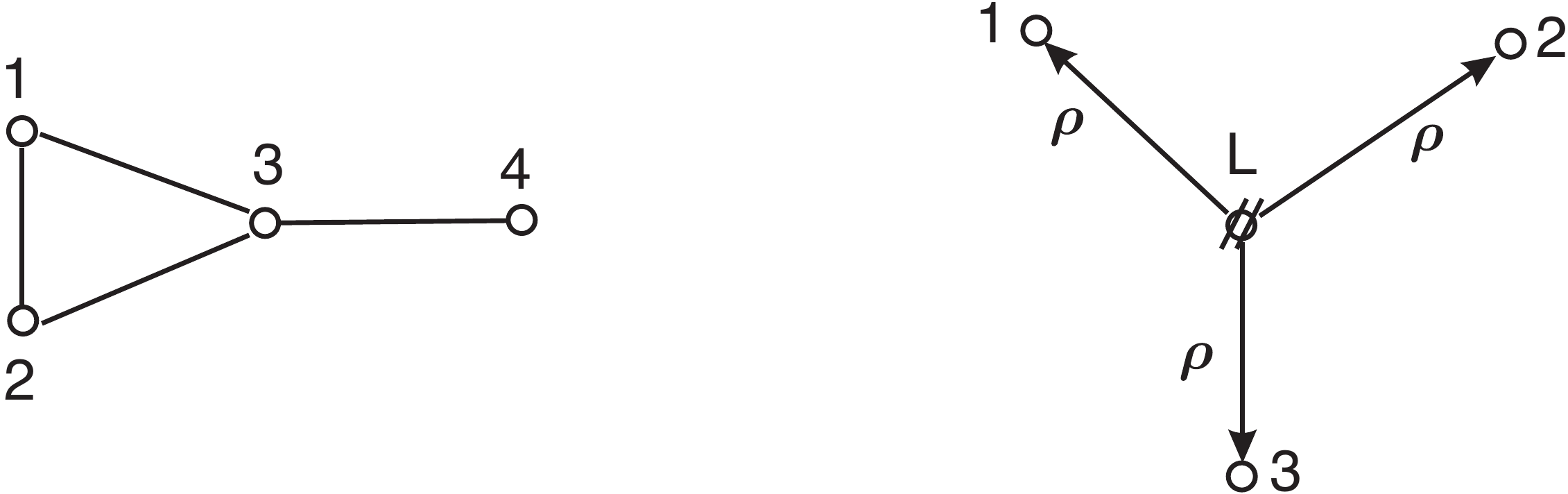}
\caption{Left: the well-fitting concentration graph for the Florence grades;  right: a possible generating graph for grades $1,2,3$.}
\label{fig:conc}
\end{figure}

A possible generating graph for the Gaussian equicorrelation matrix is the star graph displayed on the right of  Fig.~\ref{fig:conc}. In it 
mathematical ability is represented by the unobserved inner node, $L$, and the three  grades are the  outer nodes of the graph, shown as responses
to $L$ by arrows starting at $L$ and pointing to the uncoupled nodes 1,2,3; each arrow has assigned to it the same positive correlation $\rho$. After marginalizing over $L$,  each outer pair is correlated like $\rho^2$. We shall see next how well  these results are reflected in the dichotomized  data.

After median-dichotomizing the grades with  jittering, we generate precisely uniform binary variables, the marginal distributions of which  differ only little from  those obtained by simple median-dichotomizing.  One obtains the estimate of a palindromic contingency table in closed form using equation \eqref{eq:mle}
and as we shall see, the same well-fitting concentration graph as on the left of Fig.~\ref{fig:conc}.

The observed contingency table is given next,  together  with, in the additional rows  and in the following order,  the estimates of  palindromic counts,  and of the counts after imposing, in addition, conditional independences for the model of Fig.~\ref{fig:conc} on the left, and the estimates of  the corresponding log-linear interactions.

\begin{table}[htp!] 
\caption{Cells $ijkl$, levels of interactions, counts $n_{ijkl}$, estimates  of  palindromic counts, of the palindromic concentration graph counts and  of the log-linear interactions under the latter model} \label{tab:conc}
$$  
\setlength{\arraycolsep}{.5\arraycolsep}
\begin{array}{rrrrrrrrrrrrrrrr} \hline
 0000 & 1000 & 0100 & 1100 & 0010 & 1010 & 0110& 1110&  0001 & 1001 & 0101 & 1101 & 0011 & 1011 & 0110& 1111 \\
 \emptyset & 1 & 2 & 12 & 3 & 13 & 23 & 123 &4& 14 & 24 & 124 & 34 & 134 & 234 & 1234\\ \hline
 22   &3     &      3      & 0     &1        &  0     &     1&          9& 6&        2&          2&          1&          3&         2&   1&    22\\
 22   & 2&      2.5& 1.5  &1  &1  &         1.5 &     7.5 &7.5 &          1.5&         1.0&        1.0&        1.5&          2.5&        2.0   &     22.0\\
 21.2  &  2.5 &  2.5& 1.8&  0.7 & 1.0 &  1.0  &  8.3 &8.3&         1.0&         1.0&          0.7&        \footnotesize  1.     8&    2.5&        2.5&         21.2\\
  0.90 &  0   &  0   & 0.45   &0  & 0.62 &  0.62  &  0 & 0      &        0&             0&      0&   0.47&             0 &         0&         0\\ \hline
\end{array}
$$  
\end{table}
The palindromic concentration graph model fits well, with $w=10.3$ on $11$ degrees of freedom. This  decomposes into $w=9.1$ on $8$  degrees of freedom for the saturated
palindromic model and $w=1.2$ on $3$ degrees of freedom for the additional independence constraints.

Values of the studentized log-linear interactions are $2.5$, $3.5$,
$3.5$ and $3.7$ for $\lambda_{12}$, $\lambda_{13}$, $\lambda_{23}$ and $\lambda_{34}$, respectively.
Thus, the same independences as for the underlying joint Gaussian distribution fit also the median-dichotomized data and further
simplifications are not compatible given the sizes of the remaining studentized interactions. The partial  correlations  implied
by the well-fitting palindromic Markov structure has also zeros  for pairs $(1,4)$. 

The sum score of the median-dichotomized grades 1,2,3 leads  as in the underlying Gaussian distribution  not to an improved prediction
of grades in Physics. To our starting question, we get two summarizing answers.  Given a grade below the median in Geometry,
one predicts that 72\%  of these students will  have a grade below the median in Physics  and, similarly,  
 given a grade above the median in Geometry,
one predicts that 72\%  will  have a grade above  the median in Physics.

\section{Discussion} We say that centrally-symmetric  Bernoulli distributions  are  palindromic 
since their probabilities, at the fixed level of one of the variables repeat in reverse order for the  second level of this variable and
thereby mimic palindromic sequences of characters as introduced  in linguistics.

A palindromic Bernoulli distribution is characterized by the vanishing of all odd-order log-linear interactions. Hence, such zero constraints
 lead to a non-hierarchical,  log-linear model which give  centrally symmetric probabilities. Until now, it was 
only known that in centrally-symmetric Bernoulli distributions, all odd-order log-linear parameters vanish; see \citet[app. C]{edwards2000}. With these linear constraints,  distributions result which are in the regular exponential family.

Palindromic Bernoulli distributions may also  be parameterized with all odd-order interactions vanishing in a linear-in-probability model and in a multivariate-logistic model. The parameters in the three types of model are in  one-to-one relations; see Section 2. These relations are now available  in closed form for the linear and the log-linear formulations, while in general, iterative procedures are needed when the multivariate logistic formulation is involved. In any case,   equivalent parameterizations assure that the maximum-likelihood estimates of the parameters are in the same one-to-one relation; see  \cite{fisher1922}.

It is remarkable that a palindromic Bernoulli distribution can be expressed precisely as a log-linear and as a linear model, since  log-linear parameters use the notion of multiplicative interactions and the linear-in-probability models are based  instead on the notion of additive interactions as discussed, for instance by \cite{darrochspeed1983}.  

The log-linear parameterization shows that positive palindromic
Bernoulli distributions are in the regular exponential family with and without additional  independence constraints in its concentration graph.
A palindromic Ising model may have only log-linear two-factor interactions as  non-vanishing canonical parameters, while in their  linear-model formulations higher-order interactions may  be present.  
The palindromic property is preserved under marginalizing over any subset of the variables; see Proposition~\ref{marginalizePBD}, 
even though one  may no longer have an Ising model after marginalizing
over some of the variables.

Another property is important for applications. In palindromic Bernoulli distributions, many
other measures of dependence of a variable pair  are one-to-one functions of the odds-ratio; in particular the relative risk, used mainly in epidemiology, and  the risk difference,   employed almost exclusively in the literature on causal modelling. Only if a measure of dependence is a function of the 
odds-ratio, it  varies independently  of its margins; see \cite{edwards1963} and only then,  measures of  bivariate dependence become directly comparable under different sampling schemes, for instance when the overall count is fixed as in  a cross-sectional study or one of
the margins is fixed as in  a prospective study or the other margin is fixed as in a retrospective study.

 We expect  that with a direct extension of the palindromic property to discrete variables of more levels, similar attractive properties can be obtained as for the palindromic Bernoulli distribution.

\section*{Appendix 1. Proofs} 

\begin{proof}[Proof of Proposition~\ref{prop:tri}]
The proof is  by induction. 
We know that $A_1$ has a palindromic distribution. For $s = 2, \dots, d$ we assume that the random vector
$A_{[s-1]} = (A_1, \dots, A_{s-1})$ has a palindromic distribution, and then we show that the distribution of $A_{[s]} = (A_1, \dots, A_{s})$ is palindromic. 
Let $\mathcal{I}_{\rm even}$ denote the subset of $\{0,1\}^s$ with even order
and split it in two parts  
$$
\mathcal{I}_{0} = \{ a \in \mathcal{I}_{\rm even} :  a_s  = 0\}, \;
\mathcal{I}_{1} = \{ a \in \mathcal{I}_{\rm even} :  a_s  = 1\}.
$$
We then start from the identity  
$$
\pr(A_{[s]} = a_{[s]}) = \pr(A_{[s-1]} = a_{[s-1]}) \pr(A_s = a_s\mid  A_{[s-1]} = a_{[s-1]}) 
$$
and after substituting equations \eqref{eq:lin} and \eqref{eq:maineffects} and taking into account that by assumption $A_{[s-1]}$ has a palindromic distribution and thus  $\xi_{b} = 0$ for all $b \in \mathcal{I}_1$, 
we have 
$$
\pr(A_{[s]} = a) = 2^{-s} \textstyle{\sum_{b \in \mathcal{I}_0}} \xi_{b} (-1)^{a \inner b} \cdot  
\{1 + \textstyle{\sum_{j = 1}^{s-1}} \beta_{sj} (-1)^{a \inner e_{s,j} } \}
$$
where $e_{s,j}$ is a binary vector of dimension $s$ with ones exactly in positions $s$ and $j$. After multiplying and collecting terms
    we get    with
\begin{gather}
\textstyle
\xi_b = \sum_{v\in I_0\colon v \bigtriangleup \{s,j\} = b}  \xi_v  \beta_{s,j}, \quad \text{ for } b \in \mathcal{I}_1,\label{eq:1}
\end{gather}
\begin{equation*}
\pr(A_{[s]} = a)  = 2^{-s} \Big(\textstyle{\sum_{b \in \mathcal{I}_0} }\xi_{b} (-1)^{a\inner b} +\textstyle{ \sum_{b \in \mathcal{I}_1} }\xi_b(-1)^{a \inner b} \Big),
\end{equation*}
where $\bigtriangleup$ denotes the  symmetric difference of sets. Therefore $A_{[s]}$ has a linear parameterization with exclusively
even order interactions and  hence is palindromic. Thus, by induction,  the distribution of $A_{[d]} = A$ is palindromic. 
From the recursive equation \eqref{eq:1},  each linear interaction is a linear function of the regression parameters $\beta_{s,j}$.
\end{proof}

\begin{proof}[Proof of Prop.~\ref{prop:triv}]
It is known that in a strictly positive $2^3$ table  $A_1 \ci  A_2 \mid A_3\iff (\lambda_{12} = 0$ and $\lambda_{123} =0)$. As in a trivariate palindromic table the three-factor interaction is always zero the single condition $\lambda_{12} = 0$ is necessary and sufficient. This is in turn equivalent to a single condition 
    on the partial correlation $\rho_{12.3} =0$.  
  
With no constraints other than those of palindromic 
distributions, there is a smooth one-to-one transformation
$(\lambda_{12}, \lambda_{13}, \lambda_{23} ) \longleftrightarrow (\rho_{12}, \rho_{13}, \rho_{23})$ where the marginal correlations are the free parameters of $\b \xi$.

In addition, there is a one-to-one smooth transformation between the simple correlations and the partial correlations $(\rho_{12}, \rho_{13}, \rho_{23} )\longleftrightarrow (\rho_{12.3}, \rho_{13.2}, \rho_{23.1})$, since the  conditional and partial correlations coincide. Also, the function $g\colon \reals \rightarrow [-1, 1]\colon \lambda_{12}\mapsto r_{12.3}$ is strictly monotone increasing for any values of $\lambda_{13}$ and $\lambda_{23}$ and has a single zero in the origin,  so that the partial correlation $r_{12.3}$ has the same sign as the log odds-ratio $\lambda_{12}$.   

In more detail, the three variable table of Section 1,  supplemented by  both margins for the two conditional tables of $A_1, A_2$ can be written in terms of simple correlations as
\begin{center}
\begin{small}
\renewcommand*{\arraystretch}{0.8}
\begin{tabular}{lccccccc}
\hline
    \quad         &\hspace{-6mm}  $A_3\!:$\hspace{-5mm}  &0   & 0   &   & 1  & 1  &     \\
  $A_1 $         &\hspace{-6mm}  $A_2\!:$\hspace{-5mm}  &0   & 1  &\text{sum}       & 0  & 1  & \text{sum}\\ \hline
\, 0    & &$\alpha $         & $\gamma  $   &   $(1+\rho_{13})/4$ &$ \delta $  &$  \beta$   &    $(1-\rho_{13})/4  $
\\ 
\, 1         & &  $\beta $        &$ \delta  $   &  $(1-\rho_{13})/4$  & $\gamma $  &$ \alpha $    &  $(1+\rho_{13})/4  $
\\ \hline
\text{sum} &&$(1+\rho_{23})/4\nn$      & $(1-\rho_{23})/4$\nn   & 1/2&$(1-\rho_{23})/4\nn $&
$(1+\rho_{23})/4$  \nn& 1/2         \\   \hline
\end{tabular}
\end{small}
\end{center}
The probabilities in each of  the $2^2$ tables, expressed with  margins and $\rho_{12|3}$ give  e.g.
$$
8\alpha=(\rho_{12}-\rho_{13}\rho_{23}) +  \{(1+\rho_{13} )(1+\rho_{23})\}, \quad \quad 8\delta=(\rho_{12}-\rho_{13}\rho_{23}) +  \{(1-\rho_{13} )(1-\rho_{23})\},
$$
since the product of all four margins is  in both tables $(1-\rho_{13}^2)(1-\rho_{23}^2)/16^2$ and 
$$
(\rho_{12}-\rho_{13}\rho_{23})/16= \{\fracshalf-(\beta+\gamma+\delta)\} \delta- \beta\gamma=\alpha\delta-\beta\gamma\, .$$
Thus, $\rho_{12|3}=\rho_{12.3}$ and $\alpha\delta-\beta\gamma=0$ if and only if $\lambda_{12}=0$.
The above $2^3$  table shows also directly that under the independence constraint $1\ci 2|3$, we have
\begin{eqnarray*}
\tilde{\alpha} &=& \{1+\rho_{13} )(1+\rho_{23} )\}/8, \quad
\tilde{\delta} =   \{(1-\rho_{13} )(1-\rho_{23})\}/8\\
\tilde{\beta} &=& \{1-\rho_{13} )(1+\rho_{23} )\}/8,\quad 
\tilde{\gamma} =  \{1+\rho_{13} )(1-\rho_{23} )\}/8
\end{eqnarray*}
so that e.g.  $\tilde{\alpha}\tilde{\delta}$ and the  probability  $\tilde{\alpha}+\tilde{\delta}$, induced in the marginal table of $A_1, A_2$  for $(1,1)$,  are
$$ 
\tilde{\alpha}\tilde{\delta}=(1-\rho_{13}^2)(1-\rho_{23}^2)/8^2,   \quad \quad  \tilde{\alpha}+\tilde{\delta}=(1+\rho_{13}\rho_{23})/4. \qquad\qedhere
$$ 
\end{proof}  

\section*{Appendix 2. The data for the case study}
The columns of the following table contain sums of grades of three exams in four subjects for $n=78$ mathematics students at the University of Florence.

\begin{table}[hh] 
\centering 
\caption{Summed grades over 3 exams in the order: Analysis, Algebra, Geometry and Physics.}\label{tab:concex2} 
\setlength{\tabcolsep}{3.5pt}
\small
\begin{tabular}{rrrrrrrrrrrrrrrrrrrrrrrrrrrrrrrrrr} 
\hline
78&78&74&80&88&77&79&85&82&82&74&89&85&77&93&85&79&85&74&69&78&88&67&92&85&69\\
\hspace{-2mm}76&75&71&77&81&79&77&90&79&72&62&90&75&83&92&88&80&88&68&80&75&88&70&89&88&75\\
\hspace{-2mm}82&81&74&71&85&74&81&83&73&73&71&86&84&84&93&82&78&90&70&79&71&89&68&91&91&62\\
\hspace{-2mm}85&77&80&80&79&80&75&82&71&71&72&87&82&69&90&75&75&82&70&78&72&77&69&93&87&68\\[2mm]
\hspace{-2mm}79&92&76&88&73&91&76&71&65&74&80&71&78&77&70&83&89&72&82&77&91&92&75&90&90&93\\
\hspace{-2mm}78&92&78&87&68&85&78&79&68&76&89&74&81&74&68&89&81&76&81&74&92&92&69&79&82&93\\
\hspace{-2mm}71&92&84&88&64&83&82&69&71&75&80&71&85&69&67&88&83&75&83&82&93&92&72&90&89&93\\
\hspace{-2mm} 79&90&86&78&69&75&82&71&63&72&78&74&81&67&66&72&82&75&79&76&92&87&75&79&78&89\\[2mm]
\hspace{-2mm} 92&87&81&82&76&86&92&87&79&91&88&90&90&92&89&83&77&69&89&92&86&76&68&79&76&88\\
\hspace{-2mm} 93&83&69&70&75&71&80&70&70&77&88&92&85&92&84&83&82&74&83&92&74&71&62&68&66&89\\
\hspace{-2mm} 93&87&74&67&80&69&87&77&69&92&83&91&82&91&86&83&80&83&83&90&78&71&65&74&83&91\\
 \hspace{-2mm} 89&77&81&79&84&72&80&81&70&79&77&72&88&81&86&81&78&76&77&79&73&69&69&72&80&85\\ \hline
\end{tabular}   
\end{table}

\section*{Acknowledgement}
We thank G. Bulgarelli, Univ. of Florence, for the data; G. Ottaviani and the referees for their helpful comments. We used  Matlab\textregistered, 2015, and
for exact  median-dichotomizing, `jitter.m'  by R. Cotton, who adapted  the R version by W.~Stahel and M.~Maechler.

\bibliographystyle{biometrika}

\end{document}